\documentclass[11pt]{article}
\usepackage{ltexpprt} 
\usepackage{fullpage}
\usepackage{amsmath,amsfonts,amssymb}
\usepackage[usenames]{color}

\usepackage{bm}

\usepackage{xspace}

\usepackage{paralist}

\usepackage{graphicx}
\usepackage{caption}
\usepackage{epstopdf}

\usepackage{boxedminipage}

\usepackage{subfig}

\usepackage[colorlinks,urlcolor=blue,citecolor=blue,linkcolor=blue]{hyperref}

\usepackage{subfig}
\usepackage{tabularx}
\usepackage{array}

\usepackage{booktabs}
\usepackage{textcase}

\usepackage{multirow}

\usepackage{braket}

\usepackage{tikz}
\usetikzlibrary{arrows,positioning,backgrounds}
\newcommand*\circled[1]{\tikz[baseline=(char.base)]{
            \node[shape=circle,draw,minimum size=3.5mm,inner sep=0pt] (char) {#1};}}

\usepackage{rotating}

\usepackage{algorithm}
\usepackage{algorithmic}
\usepackage{slashbox}

\newcommand{\cN}{{\cal N}}

\newcommand{\cV}{{\cal V}}

\newcommand{\qed}{\hfill $\Box$}

\newcommand{\cei}[1]{\lceil #1 \rceil}

\newcommand{\pr}{{\rm Pr}}
\newcommand{\Prob}[1]{\pr\left\{#1\right\}}

\newcommand{\eps}{\varepsilon}

\newcommand{\EX}{\hbox{\bf E}}

\newcommand{\Sec}[1]{\hyperref[sec:#1]{\S\ref*{sec:#1}}} %
\newcommand{\Eqn}[1]{\hyperref[eq:#1]{(\ref*{eq:#1})}} %
\newcommand{\Fig}[1]{\hyperref[fig:#1]{Fig.\,\ref*{fig:#1}}} %
\newcommand{\Tab}[1]{\hyperref[tab:#1]{Tab.\,\ref*{tab:#1}}} %
\newcommand{\Thm}[1]{\hyperref[thm:#1]{Thm.\,\ref*{thm:#1}}} %
\newcommand{\Lem}[1]{\hyperref[lem:#1]{Lem.\,\ref*{lem:#1}}} %
\newcommand{\Prop}[1]{\hyperref[prop:#1]{Prop.~\ref*{prop:#1}}} %
\newcommand{\Cor}[1]{\hyperref[cor:#1]{Cor.~\ref*{cor:#1}}} %
\newcommand{\Def}[1]{\hyperref[def:#1]{Defn.~\ref*{def:#1}}} %
\newcommand{\Alg}[1]{\hyperref[alg:#1]{Alg.\,\ref*{alg:#1}}} %
\newcommand{\Ex}[1]{\hyperref[ex:#1]{Ex.~\ref*{ex:#1}}} %
\newcommand{\Clm}[1]{\hyperref[clm:#1]{Claim~\ref*{clm:#1}}} %
\newcommand{\Step}[1]{\hyperref[step:#1]{Step~\ref*{step:#1}}} %

\newcommand{\dcc}{{C}}

\newcommand{\algkappa}{$\gcc$-{\tt wedge sampler}}
\newcommand{\alglcc}{$\lcc$-{\tt wedge sampler}}
\newcommand{\algdcc}{$\dcc_d$-{\tt wedge sampler}}
\newcommand{\algtrid}{$T_d$-{\tt wedge sampler}}
\newcommand{\algdtri}{$\gcc(\psi,\ttype)$-{\tt sampler}}

\newcommand{\din}[1]{d_{#1}^{\leftarrow}}
\newcommand{\dout}[1]{d_{#1}^{\rightarrow}}
\newcommand{\drec}[1]{d_{#1}^{\leftrightarrow}}

\newcommand{\gcc}{\kappa}
\newcommand{\lcc}{C}

\newcommand{\ttype}{\rho}

\newcommand{\comment}[1]{}

\newcommand{\Omit}[1]{}

\def\linkaba{stealth'-stealth'}
\def\linkab{-stealth'}
\def\linkba{stealth'-}

\newcommand{\directedtriangle}[4]{
  \node (A) at (0,0) [nd] {}; %
  \node (B) at +(0.76,0) [nd] {}; %
  \node (C) at +(0.38,0.65) [nd] {}; %
  \draw[#1] (A)--(B);
  \draw[#2] (A)--(C);
  \draw[#3] (B)--(C);
  \node [inner sep = 0,below] at +(0.5,-0.2) {\footnotesize #4};
}

\newcommand{\directedwedge}[3]{
  \node (A) at (0,0) [nd] {}; %
  \node (B) at +(0.7,0) [nd] {}; %
  \node (C) at +(0.35,0.6) [nd] {}; %
  \draw[#1] (A)--(C);
  \draw[#2] (B)--(C);
  \node [inner sep = 0,below] at +(0.5,-0.2) {\footnotesize #3};
}

\newcommand{\wout}{(i)\xspace}
\newcommand{\wmid}{(ii)\xspace}
\newcommand{\win}{(iii)\xspace}

\newcommand{\ttrans}{(a)\xspace}
\newcommand{\tcycle}{(b)\xspace}
\newcommand{\toutrecip}{(c)\xspace}
\newcommand{\tmidrecip}{(d)\xspace}

\tikzstyle{vertex}=[circle,draw=black,fill=teal!50,thin,inner sep=0pt,minimum size=6pt]

\def\typea{blue}
\def\typeb{red}
\def\typec{green}

\begin{document}

\title{Wedge Sampling for Computing Clustering Coefficients\\and Triangle Counts on Large Graphs 
\thanks{This manuscript is an extended version of \cite{SePiKo13}. This work was funded by the DARPA GRAPHS
    program and by the DOE ASCR Complex  Distributed Interconnected Systems
    (CDIS) program, and Sandia's Laboratory Directed Research \& Development (LDRD) program. Sandia National Laboratories is a multi-program
    laboratory managed and operated by Sandia Corporation, a wholly
    owned subsidiary of Lockheed Martin Corporation, for the
    U.S. Department of Energy's National Nuclear Security
    Administration under contract DE-AC04-94AL85000.}}
\date{}

\author{C. Seshadhri\thanks{Sandia National Laboratories, CA, scomand@sandia.gov} 
\and
Ali Pinar\thanks{Sandia National Laboratories, CA, apinar@sandia.gov} 
\and 
Tamara G. Kolda\thanks{Sandia National Laboratories, CA, tgkolda@sandia.gov} }

\maketitle

\begin{abstract} 
Graphs are used to model interactions in a variety of contexts, and 
there is a growing need to quickly assess the structure  of such graphs. 
Some of the most useful graph metrics
are based on \emph{triangles}, such as those measuring social cohesion.
Algorithms to compute them can be extremely expensive, even for moderately-sized graphs with only millions of edges.
Previous work has considered node and edge sampling; in contrast, we consider \emph{wedge sampling}, which provides faster and more accurate approximations than competing techniques.
Additionally, wedge sampling enables estimation local clustering coefficients, degree-wise clustering coefficients, uniform triangle sampling, 
and directed triangle counts.
Our methods come with provable and practical probabilistic error estimates  for all computations. We provide extensive results that show our methods are both more accurate and faster than state-of-the-art alternatives.
\end{abstract} 

\section{Introduction}
\label{sec:intro}
Graphs are used  to model  infrastructure networks,  the World Wide Web, computer traffic, molecular interactions, ecological systems, epidemics, citations, and social interactions, among others.  Despite the differences in the motivating applications, some topological structures have emerged to be important across all these domains. 
In particular, the \emph{triangle} is a manifestation of 
homophily (people become friends with those similar to themselves) and transitivity (friends of friends become friends).   
The triangle structure of a graph is commonly used in the social sciences for positing various theses
on behavior~\cite{Co88, Po98, Burt04, FoDeCo10}. 
Triangles have also been used  in graph mining applications such as spam detection and finding common topics on the WWW~\cite{EcMo02, BeBoCaGi08}.
A new generative model, Blocked Two-Level Erd\"os-R\'enyi (BTER) \cite{SeKoPi11}, can capture  triangle behavior in real-world graphs, but necessarily requires the degree-wise clustering coefficients as input.
Relationships among degrees of triangle vertices can also be used as a descriptor of the underlying graph~\cite{DuPiKo12}.

In this paper, we study the idea of \emph{wedge sampling}, i.e., choosing random wedges (from a uniform distribution over all wedges) to compute various triadic measures on large-scale graphs.
We provide precise probabilistic error bounds where the accuracy depends on the number of random samples.
The term \emph{wedge} refers to a path of length 2; in \Fig{example},
\circled{2}--\circled{1}--\circled{3} is a wedge \emph{centered} at node 1. A \emph{triangle} is a
cycle of length 3; in \Fig{example}, \circled{3}--\circled{4}--\circled{5} is a triangle. We say a
wedge is \emph{closed} if its vertices form a triangle. Observe that
each triangle consists of three closed wedges.

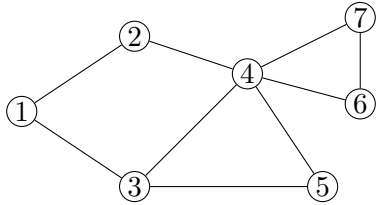
\begin{figure}[t!]
  \centering
  \begin{tikzpicture}[scale=0.5,
    nd/.style={circle,draw,fill=white,minimum size=4mm,inner sep=0pt}]
    \node (1) at (0,0) [nd] {1};
    \node (2) at (3,2) [nd] {2};
    \node (3) at (3,-2) [nd] {3};
    \node (4) at (6,1) [nd] {4};
    \node (5) at (8,-2) [nd] {5};
    \node (6) at (9,0.2) [nd] {6};
    \node (7) at (9,2.5) [nd] {7};
    \draw (1) to (2);
    \draw (1) to (3);
    \draw (2) to (4);
    \draw (3) to (4);
    \draw (3) to (5);
    \draw (4) to (5);
    \draw (4) to (6);
    \draw (4) to (7);
    \draw (6) to (7);
  \end{tikzpicture}
  \caption{Example graph with $n=7$ vertices, $m=8$ edges, $W=18$ wedges, and $T=2$ triangles. 
  }
  \label{fig:example}
\end{figure}
\subsection{Triadic measures on graphs} \label{sec:info}
 Let $n$, $m$, $W$, and $T$ denote
the number of nodes, edges, wedges, and triangles, respectively, in a graph $G$.
A standard algorithmic task is to count (or approximate) $T$.
There are many other ``triadic'' measures associated with graphs as well.
The following are classic quantities of interest, and are defined on \emph{undirected graphs}. (Formal definitions are given in \Tab{notation}.)

\medskip
\begin{asparaitem}
	\item Transitivity or global clustering coefficient~\cite{WaFa94}: This is $\kappa = 3T/W$, and is the
	fraction of wedges that are closed (i.e., participate in a triangle). Intuitively, it measures how often friends of
friends are also friends, and it is the most common triadic
measure. 
	\item Local clustering coefficient~\cite{WaSt98}: The clustering coefficient of
	vertex $v$ (denoted by $C_v$) is the fraction of wedges centered at $v$
	that are closed. The average of $C_v$ over all vertices $v$ is the \emph{local
	clustering coefficient}, denoted by $\lcc$.
	\item Degree-wise clustering coefficients: We let $\dcc_d$ denote the average
clustering coefficient of degree-$d$ vertices. In addition to degree distribution, many graphs
are characterized by plotting the degree-wise clustering 
coefficients, i.e., $\dcc_d$, versus $d$.%
\end{asparaitem}
\medskip
In \Fig{example}, the transitivity is $\gcc = 3T/W=3\cdot 2/18=1/3$.
The clustering coefficients of the vertices are 0, 0, 1/3, 1/5, 1, 1, and 1 in order, and the local clustering coefficient is 0.5048. 
In general, the most direct method to compute these metrics is an exact computation
that finds all triangles. For degree-wise clustering coefficients, no other method was previously known.

Triangles and transitivity in directed graphs  have also been the subject of recent work (see e.g.~\cite{SePiDuKo13} and references therein). 
In a directed graph, edges are ordered pairs of vertices of the form $(i,j)$, indicating a link from node $i$ to node $j$.
When edges $(i,j)$ \emph{and} $(j,i)$ exist, we say there is a \emph{reciprocal edge} between $i$ and $j$. If just one edge exists, then we say it is a \emph{one-way edge}.
Considering all possible combinations of reciprocal and one-way edges leads to 6 different wedges and 7 different triangles.

\subsection{Our contributions \label{sec:results}}

In this paper, 
we discuss the simple yet powerful technique of \emph{wedge sampling} for counting
triangles. Wedge sampling is really an algorithmic template, since various algorithms can
be obtained from the same basic idea. It can be used to estimate all the different
triadic measures detailed above. The method
also provides precise probabilistic error bounds where the accuracy depends on the number of random samples.
The mathematical analysis of this method is a direct consequence of
standard Chernoff-Hoeffding bounds. If we want an estimate for $\gcc$ with
error $\eps \leq 0.1$ with probability 99.9\% (say), we need only 380
wedge samples. This estimate is \emph{independent of the size of the
  graph}, though the preprocessing required by our method is linear in
the number of edges (to obtain the degree distribution).

We discovered older independent work by Schank and Wagner that proposes the same wedge sampling idea for estimating the global and local clustering coefficients \cite{ScWa05-2}. Our work extends that in several directions, including several other uses for wedge sampling (such as directed triangle counting, random triangle sampling, degree-wise clustering coefficients) and much more extensive numerical results. This manuscript is an extended version of \cite{SePiKo13}; 
here we give more detailed experimental results and also show how wedge sampling applies to computing directed triangle counts.
A detailed list of contributions follows.

\begin{asparaitem}
\item \textbf{Computing the transitivity (global clustering
    coefficient) and local clustering coefficient:} We show how to use
  wedge sampling to approximate the global and local clustering
  coefficients: $\gcc$ and $\lcc$. We compare these methods to the
  state of the art, showing significant improvement for large-scale
  graphs.
  \item \textbf{Computing degree-wise clustering coefficients:} The idea of wedge sampling can be extended for more detailed measurements of graphs. We show how to calculate the degree-wise clustering coefficients, $\{\dcc_d\}$ for $d=1,\dots,d_{\max}$. The only other competing method that can compute $\{\dcc_d\}$ is an exhaustive enumeration.
  We compare with the basic fast enumeration algorithm given by~\cite{ChNi85,ScWa05} (which has been
  studied and reinvented by~\cite{Co09,SuVa11}).
  \item \textbf{Computing triangles per degree:} Wedge sampling can also be employed to sample random triangles, including the application of estimating the number of triangles containing one (or more) vertices of degree $d$, denoted $\{T_d\}$ for $d=1,\dots,d_{\max}$.
  \item \textbf{Counting directed triangles:} Few methods have been considered for counting triangles in directed graphs. It is especially complicated because there are 7 types of triangles to consider. Once again, the versatility of wedge sampling is used to develop a method for counting all types of directed triangles.
\end{asparaitem}

\subsection{Related Work}

There has been significant work on enumeration of all triangles~\cite{ChNi85,ScWa05,latapy08, BeFoNoPh11,ChCh11}.
Recent work by Cohen~\cite{Co09} and Suri and Vassilvitskii~\cite{SuVa11} give MapReduce implementations
of these algorithms. Arifuzzaman et al.~\cite{ArKhMa12} give a massively parallel algorithm for computing
clustering coefficients.  Enumeration algorithms however, can be very expensive, since  graphs even of moderate size (millions of vertices) can have
an extremely large number of triangles (see, e.g., \Tab{prop}). 
Eigenvalue/trace based methods have been used by Tsourakakis~\cite{Ts08} and Avron~\cite{Av10} to compute
estimates of the total and per-degree number of triangles. However, computing eigenvalues (even just a few of them) is a compute-intensive task and  quickly becomes intractable on large graphs. In our experiment, even computing the largest eigenvalue was multiple orders of magnitude slower than full enumeration.

Most relevant to our work are sampling mechanisms.
Tsourakakis et al.~\cite{TsDrMi09} started the use of sparsification methods, the most important of which
is Doulion~\cite{TsKaMiFa09}.  This method sparsifies the graph by keeping each edge with probability $p$; counts  the triangles  in the sparsified graph; and multiplies this count by $p^{-3}$ to predict the  number of triangles in the original graph.  
Various theoretical analyses of this algorithm (and its variants) have been
proposed~\cite{KoMiPeTs10,TsKoMi11,PaTs12}. 
One of the main benefits of Doulion is that it reduces large graphs to smaller ones that can be loaded into memory. 
However, the Doulion estimate can suffer from high variance~\cite{YoKi11}.
Alternative sampling mechanisms have been proposed for streaming and semi-streaming algorithms \cite{BaKuSi02, JoGh05, BeBoCaGi08,BuFrLeMaSo06}.
Yet, all these fast sampling methods only estimate the number of triangles
and give no information about other triadic measures.

In subsequent work by the authors of this paper, a Hadoop implementation of these techniques is given in ~\cite{KoPiPlSe13}, 
and a streaming version of the wedge sampling method is presented in~\cite{JhSePi13}.

\section{Overview of  wedge sampling} \label{sec:wedge}

We present the general method of wedge sampling for estimating clustering coefficients. 
In later sections, we instantiate this for different algorithms.
Before we begin, we summarize the notation presented in \Sec{intro} in \Tab{notation}.

\begin{table}[t]
  \caption{Graph notation and clustering coefficients for undirected graphs}
  \label{tab:notation}
  \begin{tabularx}{\linewidth}{lX|lX}
    \toprule
    {$n$} &  number of vertices &
    {$n_d$} & number of vertices of degree $d$\\
    {$m$} & number of edges &
    {$d_v$} & degree of vertex $v$\\
    {$V_d$} & set of degree-$d$ vertices &
    {$W$} &  total number of wedges\\
    {$W_v$} &  number of wedges centered at vertex $v$&
    {$T$} & total number of triangles \\
    {$T_v$} & number of triangles incident to vertex $v$& 
    {$T_d$} & \raggedright number of triangles incident to degree-$d$ vertices
  \end{tabularx}
  \begin{tabularx}{\linewidth}{@{}lX@{}}
    \toprule
    $\gcc = 3T/W$ & transitivity\\
    $C_v = T_v/W_v$ & clustering coefficient of vertex $v$\\
    $ \lcc = n^{-1} \sum_v C_v$ & local clustering coefficient\\
    $ \dcc_d = n_d^{-1} \sum_{v \in V_d} C_v$ & degree-wise clustering coefficient\\
    \bottomrule
  \end{tabularx}
\end{table}

We say a wedge is \emph{closed} if it is part of a
triangle; otherwise, the wedge is \emph{open}. Thus,
in \Fig{example}, 
\circled{5}-\circled{4}-\circled{6} is an open wedge, while
\circled{3}-\circled{4}-\circled{5} is a closed wedge.
The middle vertex of a wedge is called its \emph{center}, i.e.,
wedges \circled{5}-\circled{4}-\circled{6} and
\circled{3}-\circled{4}-\circled{5} are centered at
\circled{4}.

Wedge sampling is best understood in terms of the following thought experiment.
Fix some distribution over wedges and
let $w$ be a random wedge. 
Let $X$ be the indicator random variable that is $1$
if $w$ is closed and $0$ otherwise.  Denote $\mu = \EX[X]$.

Suppose we wish to estimate $\mu$. We simply generate $k$
independent random wedges $w_1, w_2, \ldots, w_k$, with associated
random variables $X_1, X_2, \ldots, X_k$. Define
$\bar{X} = \frac{1}{k} \sum_{i \leq k} X_i$ as our estimate. The
Chernoff-Hoeffding bounds give guarantees on $\bar X$, as follows.
\begin{theorem}[Hoeffding \cite{Ho63}]
  \label{thm:Hoeffding}
  Let $X_1, X_2, \dots, X_k$ be independent random variables with $0
  \leq X_i \leq 1$ for all $i=1,\dots,k$.  Define $\bar X =
  \frac{1}{k} \sum_{i=1}^k X_i$. Let $\mu = \EX[\bar X]$. 
  Then for $\eps \in (0,1)$, we have
  \begin{displaymath}
    \Prob{ |\bar X - \mu | \geq \eps } \leq 2 \exp(-2 k \eps^2).
  \end{displaymath}
\end{theorem}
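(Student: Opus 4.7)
The plan is to follow the classical Chernoff--Hoeffding argument, which combines Markov's inequality applied to the exponential moment generating function with a convexity bound on the MGF of a bounded random variable. I would first handle only the one-sided deviation $\Pr[\bar X - \mu \geq \eps]$, then double the bound by symmetry to get the two-sided statement.

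For the one-sided bound, I would fix a parameter $t > 0$ to be chosen later, write $\bar X - \mu = \frac{1}{k}\sum_{i=1}^k (X_i - \mu_i)$ where $\mu_i = \EX[X_i]$, and apply Markov's inequality to $\exp(t \sum_i (X_i - \mu_i))$:
\[
\Prob{\bar X - \mu \geq \eps} \leq e^{-tk\eps} \cdot \EX\Bigl[\exp\bigl(t \textstyle\sum_{i=1}^k (X_i - \mu_i)\bigr)\Bigr].
\]
Independence of the $X_i$ lets me factor the expectation into $\prod_{i=1}^k \EX[e^{t(X_i - \mu_i)}]$.

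The core technical step is Hoeffding's lemma: if $Y$ is a random variable with $\EX[Y]=0$ and $a \leq Y \leq b$ almost surely, then $\EX[e^{tY}] \leq \exp(t^2(b-a)^2/8)$. I would prove this by writing $Y$ as a convex combination $Y = \lambda b + (1-\lambda) a$ with $\lambda = (Y-a)/(b-a)$, using convexity of $y \mapsto e^{ty}$ to bound $e^{tY} \leq \lambda e^{tb} + (1-\lambda) e^{ta}$, taking expectations, and then showing that the resulting function of $t$ is bounded by $\exp(t^2(b-a)^2/8)$ via a Taylor-remainder estimate on its logarithm. Applying this lemma with $Y_i = X_i - \mu_i$, which lies in $[-\mu_i, 1-\mu_i]$ so $b - a = 1$, gives $\EX[e^{t(X_i - \mu_i)}] \leq e^{t^2/8}$ for each $i$.

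Combining these estimates yields $\Prob{\bar X - \mu \geq \eps} \leq \exp(-tk\eps + kt^2/8)$, which I would minimize over $t > 0$; the optimum is $t = 4\eps$, producing the bound $\exp(-2k\eps^2)$. The same argument applied to $-X_i$ (which also lies in $[-1,0]$, equivalently shifted to an interval of length $1$) gives the symmetric bound on $\Prob{\mu - \bar X \geq \eps}$, and a union bound over the two tails supplies the factor of $2$ in the statement. I expect the only real obstacle to be the Hoeffding lemma itself, since the optimization in $t$ and the union bound are mechanical; the convexity-plus-Taylor argument for the MGF estimate is the step that actually uses the $[0,1]$ boundedness hypothesis and deserves the most care.
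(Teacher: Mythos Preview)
Your argument is correct and is exactly the standard Chernoff--Hoeffding proof. However, the paper does not prove this theorem at all: it is stated with a citation to Hoeffding's original 1963 paper and used as a black-box tool for the subsequent results, so there is no paper proof to compare against.
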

Hence, if we set $k = \cei{0.5 \eps^{-2}\ln(2/\delta)}$, then
$\Pr[|\bar{X} - \mu| > \eps] < \delta$. In other words,
for $k$ samples,
with confidence at least $1-\delta$, 
the error in our estimate is at most $\eps$.

\Fig{level_curves} shows the number of samples needed
as the error rate varies. We show three curves for
different confidence levels. Increasing the confidence has minimal
impact on the number of samples. The number of samples is fairly low
for error rates of 0.1 or 0.01, but it increases with the inverse
square of $\eps$. 

\begin{figure}[htp]
  \centering
  \includegraphics{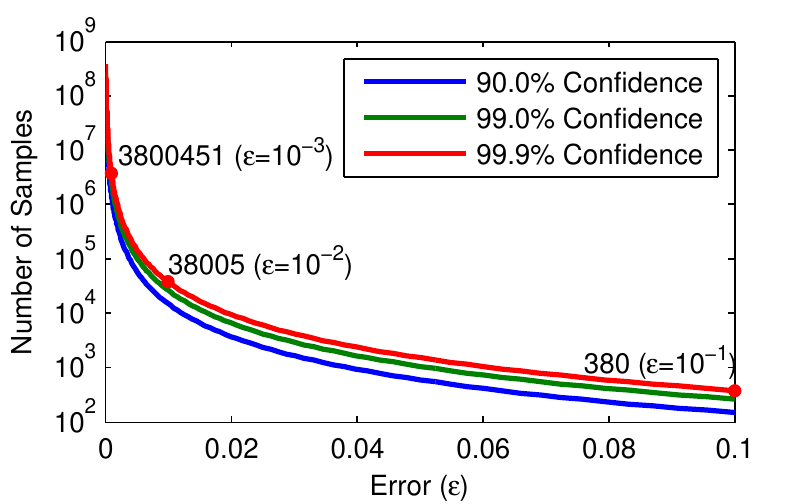}
  \caption{The number of samples needed for different error rates and
    different levels of confidence. A few data points at 99.9\%
    confidence are highlighted.}
  \label{fig:level_curves}
\end{figure}

\section{Computing the transitivity and the number of triangles} \label{sec:gcc}

We use the wedge sampling scheme to estimate the transitivity, $\gcc$. 
Consider the uniform distribution on wedges. We can interpret
$\EX[X]$ as the probability that a uniform random wedge is closed
or, alternately, the fraction of closed wedges.

To generate a uniform random wedge, note that the number of wedges
centered at vertex $v$ is $W_v = {d_v \choose 2}$ and $W = \sum_v
W_v$.  We set $p_v = W_v/W$ to get a distribution over the vertices.
Note that the probability of picking $v$ is proportional to the number
of wedges centered at $v$.
A uniform random wedge centered at $v$ can be generated by choosing two
random neighbors of $v$ (without replacement).

\begin{claim} \label{clm:random} 
  Suppose we choose vertex $v$ with probability $p_v$
  and take a uniform random pair of neighbors of $v$. 
  This generates a uniform random wedge.
\end{claim}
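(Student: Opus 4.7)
The plan is to verify directly that every wedge in the graph is selected with the same probability $1/W$ under the two-step sampling procedure. Since a wedge is uniquely specified by its center vertex together with an unordered pair of neighbors of that center, fixing an arbitrary wedge $w$ pins down a unique center $v$ and a unique pair $\{u_1,u_2\} \subseteq N(v)$.

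Next I would compute the selection probability of $w$ by multiplying the probabilities of the two independent choices in the procedure. First, $v$ is chosen with probability $p_v = W_v/W = \binom{d_v}{2}/W$. Second, conditioned on $v$ being chosen, the procedure picks an unordered pair of distinct neighbors of $v$ uniformly at random; there are exactly $\binom{d_v}{2} = W_v$ such pairs, so the specific pair $\{u_1,u_2\}$ is chosen with conditional probability $1/W_v$. Multiplying gives
\[
\Pr[w \text{ is selected}] \;=\; \frac{W_v}{W} \cdot \frac{1}{W_v} \;=\; \frac{1}{W}.
\]
Since $w$ was arbitrary and the right-hand side does not depend on $w$, the distribution over wedges is uniform.

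I would also note two minor sanity checks to make the argument airtight: (i) the $p_v$ are a valid probability distribution on vertices, i.e., $\sum_v p_v = \sum_v W_v/W = 1$ by the definition $W = \sum_v W_v$; and (ii) only vertices with $d_v \geq 2$ have $W_v > 0$, so the conditional step of picking two distinct neighbors is always well-defined whenever $v$ has positive selection probability. There is no real obstacle here — the claim is essentially a one-line counting identity — so the only thing to be careful about is treating wedges as unordered pairs of endpoints with a distinguished center, consistent with the definitions of $W_v$ and $W$ used throughout the paper.
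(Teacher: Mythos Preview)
Your proof is correct and follows essentially the same argument as the paper: fix an arbitrary wedge with center $v$, multiply the probability $p_v = W_v/W$ of selecting $v$ by the conditional probability $1/\binom{d_v}{2} = 1/W_v$ of selecting that particular pair of neighbors, and obtain $1/W$. The additional sanity checks you include (that the $p_v$ sum to one and that the two-neighbor step is well-defined whenever $p_v > 0$) are reasonable but not present in the paper's terse version.
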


\begin{proof} Consider some wedge $w$ centered at vertex $v$. 
The probability that $v$ is selected is $p_v=W_v/W$.
The random pair has probability of $1/{d_v \choose 2} = 1/W_v$.
Hence, the net probability of $w$ is $1/W$.
\qed
\end{proof}

\Alg{C} shows the randomized algorithm \algkappa{} for estimating $\gcc$ in a graph $G$. Observe that the first step assumes that the degree of each vertex is already computed. Sampling with replacements means that we sample from the original distribution repeatedly and repeats are allowed. Sampling without replacement means that we cannot pick the same item twice.

\begin{algorithm}
\caption{ Computing the transitivity (\algkappa{}) }
\label{alg:C}
\begin{algorithmic}[1]
\STATE	Select $k$ random vertices (with replacement) according to the
probability distribution defined by $\{p_v\}$ where $p_v = W_v/W$.
\STATE	 For each selected vertex $v$, choose two
neighbors of $v$ (uniformly at random without replacement) to generate a random wedge. The set of all wedges comprises the sample set.
\STATE	Output the fraction of closed wedges in the sample set as estimate of $\gcc$.
\end{algorithmic}
\end{algorithm}

Combining the bound of \Thm{Hoeffding} with \Clm{random}, we get the following theorem. Note that the number
of samples required is \emph{independent of the graph size}, but
computing $p_v$ does depend on the number of edges, $m$. 

\begin{theorem} \label{thm:kappa} 
Set $k = \cei{0.5 \eps^{-2}\ln(2/\delta)}$. The algorithm \algkappa{}
outputs an estimate 
$\bar X$ for the transitivity $\gcc$
such that $|\bar{X} - \gcc| < \eps$
with probability greater than $(1-\delta)$.
\end{theorem}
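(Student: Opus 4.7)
The plan is to string together the two ingredients already provided: \Clm{random} (which guarantees that the sampling procedure produces a uniform random wedge) and \Thm{Hoeffding} (which gives the concentration bound). First I would set up the indicator random variables: for each of the $k$ sampled wedges $w_i$, let $X_i = 1$ if $w_i$ is closed and $X_i = 0$ otherwise, and let $\bar X = \frac{1}{k}\sum_i X_i$ be the output of \algkappa{}. Since Step~1 of the algorithm samples vertices with replacement and Step~2 picks a pair of neighbors independently, the $X_i$ are i.i.d. and each is bounded in $[0,1]$, so \Thm{Hoeffding} applies.

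Next I would identify the mean. By \Clm{random}, each $w_i$ is a uniform random wedge, so $\EX[X_i]$ is exactly the fraction of wedges that are closed. Since each triangle contributes three closed wedges, the number of closed wedges equals $3T$, and hence $\mu = \EX[\bar X] = 3T/W = \gcc$. This is where we convert the probabilistic statement into a statement about the transitivity.

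Finally I would plug the given value of $k$ into \Thm{Hoeffding}. With $k = \cei{0.5\eps^{-2}\ln(2/\delta)}$, we have $2k\eps^2 \geq \ln(2/\delta)$, so
\[
\Prob{|\bar X - \gcc| \geq \eps} \leq 2\exp(-2k\eps^2) \leq 2\exp(-\ln(2/\delta)) = \delta,
\]
which yields the desired high-probability bound. The whole argument is essentially a verification; there is no real obstacle beyond being careful that (a) the sampling distribution is genuinely the uniform one over wedges (handled by \Clm{random}), (b) the samples are independent (true because the algorithm samples vertices with replacement and makes independent neighbor selections), and (c) the arithmetic on $k$ works out so that the Hoeffding tail is at most $\delta$.
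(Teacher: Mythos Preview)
Your proposal is correct and matches the paper's own argument, which simply states that the theorem follows by combining \Clm{random} with the Hoeffding bound of \Thm{Hoeffding}. You have merely spelled out the details (identifying $\mu=3T/W=\gcc$ and checking the arithmetic on $k$) that the paper leaves implicit.
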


To get an estimate on $T$, the number of triangles, we output $\bar{X}
\cdot W/3$, which is guaranteed to be within $\pm \eps W/3$ of
$T$ with probability greater than $1-\delta$.

\newcommand\cta[1]{\multicolumn{1}{|c|}{#1}}
\newcommand\ct[1]{\multicolumn{1}{c|}{#1}}
\begin{table}[t]
 \caption{Properties of the graphs and runtimes for enumeration.}
\label{tab:prop}
  \centering\small
  \begin{tabular}{|>{\tt}r@{\,}|*{7}{r@{\,}|}}
    \hline
            \cta{Graph} & \ct{Nodes} & \ct{Edges} & \ct{Wedges} & \ct{Triangles} & \ct{Transitivity} & \ct{Local CC} & \ct{Enum.} \\
            \cta{Name} &\ct{$n$}&\ct{$m$}&\ct{$W$}&\ct{$T$}&\ct {$\gcc$} &\ct{$\lcc$}& Time (secs) \\ %
    \hline
             amazon0312 &   401K &  2350K &    69M &  3686K &	0.160 & 0.421 & 0.261	\\
             amazon0505 &   410K &  2439K &    73M &  3951K &  	0.162 & 0.427& 0.269 \\
             amazon0601 &   403K &  2443K &    72M &  3987K &  0.166 & 0.430& 0.268	 \\
             as-skitter &  1696K & 11095K & 16022M & 28770K & 0.005	& 0.296 & 90.897	\\
                        cit-Patents &  3775K & 16519K &   336M &  7515K & 0.067 & 0.092& 3.764 \\
            	             roadNet-CA &  1965K &  2767K &     6M &   121K & 0.060& 0.055 & 0.095 \\
          	           web-BerkStan &   685K &  6649K & 27983M & 64691K & 0.007& 0.634  & 54.777 \\
             web-Google &   876K &  4322K &   727M & 13392K & 0.055  & 0.624& 0.894	\\
           web-Stanford &   282K &  1993K &  3944M & 11329K & 0.009 & 0.629& 6.987  \\
              wiki-Talk &  2394K &  4660K & 12594M &  9204K &  0.002 & 0.201&  20.572	\\
              youtube & 1158K&  2990K& 1474M& 3057K& 0.006& 0.128 & 2.740 \\
              flickr & 1861K& 15555K  &14670M & 548659K &   0.112 &  0.375 & 567.160 \\
              livejournal & 5284K & 48710K& 7519M& 310877K& 0.124 & 0.345& 102.142 \\
              \hline
 \end{tabular}
\end{table}

We  implemented our algorithms in {C} and ran our experiments on a
computer equipped with a 2.3GHz Intel core i7 processor with 4~cores
and  256KB  L2 cache (per core), 8MB L3 cache, and an 8GB memory.  
We performed our experiments on 13 graphs  from
SNAP~\cite{Snap} and per private communication
with the authors of~\cite{MiMaGu07}.
In all cases, directionality is ignored, and repeated and  self-edges are omitted. 
The properties of these matrices are presented in \Tab{prop}.
The last column reports the times for the enumeration algorithm. This is based on
the principles of \cite{ChNi85, ScWa05, Co09, SuVa11}: each
edge is assigned to the vertex with a smaller degree  (using the
vertex numbering as a tie-breaker), and then vertices only check
wedges formed by edges assigned to them for closure. 

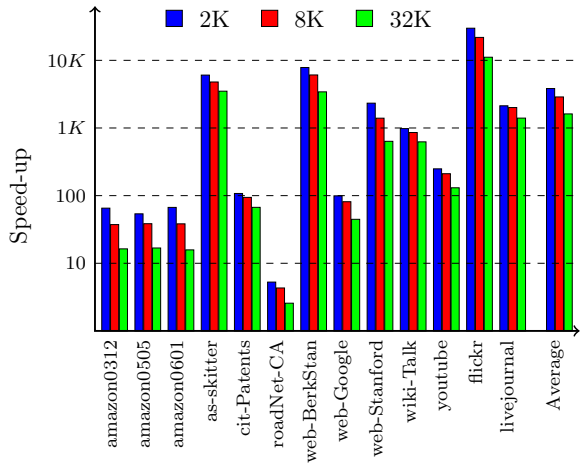
\begin{figure}
  \centering
\scalebox{0.9}{
\begin{tikzpicture}\draw [fill=\typea,thin] (0.100,0) rectangle (0.220,1.815) ;
\draw [fill=\typeb,thin] (0.230,0) node[below]{\rotatebox[origin=t]{90}{\scriptsize amazon0312}} rectangle (0.350,1.572) ;
\draw [fill=\typec,thin] (0.360,0) rectangle (0.480,1.213) ;
\draw [fill=\typea,thin] (0.590,0) rectangle (0.710,1.731) ;
\draw [fill=\typeb,thin] (0.720,0) node[below]{\rotatebox[origin=t]{90}{\scriptsize amazon0505}} rectangle (0.840,1.585) ;
\draw [fill=\typec,thin] (0.850,0) rectangle (0.970,1.226) ;
\draw [fill=\typea,thin] (1.080,0) rectangle (1.200,1.826) ;
\draw [fill=\typeb,thin] (1.210,0) node[below]{\rotatebox[origin=t]{90}{\scriptsize amazon0601}} rectangle (1.330,1.583) ;
\draw [fill=\typec,thin] (1.340,0) rectangle (1.460,1.198) ;
\draw [fill=\typea,thin] (1.570,0) rectangle (1.690,3.782) ;
\draw [fill=\typeb,thin] (1.700,0) node[below]{\rotatebox[origin=t]{90}{\scriptsize as-skitter}} rectangle (1.820,3.680) ;
\draw [fill=\typec,thin] (1.830,0) rectangle (1.950,3.544) ;
\draw [fill=\typea,thin] (2.060,0) rectangle (2.180,2.032) ;
\draw [fill=\typeb,thin] (2.190,0) node[below]{\rotatebox[origin=t]{90}{\scriptsize cit-Patents}} rectangle (2.310,1.974) ;
\draw [fill=\typec,thin] (2.320,0) rectangle (2.440,1.827) ;
\draw [fill=\typea,thin] (2.550,0) rectangle (2.670,0.722) ;
\draw [fill=\typeb,thin] (2.680,0) node[below]{\rotatebox[origin=t]{90}{\scriptsize roadNet-CA}} rectangle (2.800,0.635) ;
\draw [fill=\typec,thin] (2.810,0) rectangle (2.930,0.410) ;
\draw [fill=\typea,thin] (3.040,0) rectangle (3.160,3.894) ;
\draw [fill=\typeb,thin] (3.170,0) node[below]{\rotatebox[origin=t]{90}{\scriptsize web-BerkStan}} rectangle (3.290,3.784) ;
\draw [fill=\typec,thin] (3.300,0) rectangle (3.420,3.534) ;
\draw [fill=\typea,thin] (3.530,0) rectangle (3.650,1.997) ;
\draw [fill=\typeb,thin] (3.660,0) node[below]{\rotatebox[origin=t]{90}{\scriptsize web-Google}} rectangle (3.780,1.910) ;
\draw [fill=\typec,thin] (3.790,0) rectangle (3.910,1.650) ;
\draw [fill=\typea,thin] (4.020,0) rectangle (4.140,3.367) ;
\draw [fill=\typeb,thin] (4.150,0) node[below]{\rotatebox[origin=t]{90}{\scriptsize web-Stanford}} rectangle (4.270,3.145) ;
\draw [fill=\typec,thin] (4.280,0) rectangle (4.400,2.803) ;
\draw [fill=\typea,thin] (4.510,0) rectangle (4.630,2.991) ;
\draw [fill=\typeb,thin] (4.640,0) node[below]{\rotatebox[origin=t]{90}{\scriptsize wiki-Talk}} rectangle (4.760,2.933) ;
\draw [fill=\typec,thin] (4.770,0) rectangle (4.890,2.795) ;
\draw [fill=\typea,thin] (5.000,0) rectangle (5.120,2.396) ;
\draw [fill=\typeb,thin] (5.130,0) node[below]{\rotatebox[origin=t]{90}{\scriptsize youtube}} rectangle (5.250,2.324) ;
\draw [fill=\typec,thin] (5.260,0) rectangle (5.380,2.116) ;
\draw [fill=\typea,thin] (5.490,0) rectangle (5.610,4.475) ;
\draw [fill=\typeb,thin] (5.620,0) node[below]{\rotatebox[origin=t]{90}{\scriptsize flickr}} rectangle (5.740,4.339) ;
\draw [fill=\typec,thin] (5.750,0) rectangle (5.870,4.046) ;
\draw [fill=\typea,thin] (5.980,0) rectangle (6.100,3.328) ;
\draw [fill=\typeb,thin] (6.110,0) node[below]{\rotatebox[origin=t]{90}{\scriptsize livejournal}} rectangle (6.230,3.302) ;
\draw [fill=\typec,thin] (6.240,0) rectangle (6.360,3.146) ;
\draw [fill=\typea,thin] (6.670,0) rectangle (6.790,3.583) ;
\draw [fill=\typeb,thin] (6.800,0) node[below]{\rotatebox[origin=t]{90}{\scriptsize Average}} rectangle (6.920,3.459) ;
\draw [fill=\typec,thin] (6.930,0) rectangle (7.050,3.208) ;
\draw [<->, thick] (7.16,0) -- (0,0)-- (0,0.00) node[left] at (-0.8, 2.00) {\rotatebox{90}{\small Speed-up}} -- (0, 4.80) ;
\draw [dashed] (0, 1.000) node [left]{\scriptsize $10$} -- (7.080,1.000);
\draw [dashed] (0, 2.000) node [left]{\scriptsize$100$} -- (7.080,2.000);
\draw [dashed] (0, 3.000) node [left]{\scriptsize$1K$} -- (7.080,3.000);
\draw [dashed] (0, 4.000) node [left]{\scriptsize$10K$} -- (7.080,4.000);
\draw [fill=\typea,thin] (1.05,4.50) node[right] at(1.41,4.62) {\small 2K} rectangle (1.29,4.74)  ;
\draw [fill=\typeb,thin] (2.46,4.50) node[right]at(2.81,4.62) {\small 8K} rectangle (2.70,4.74)  ;
\draw [fill=\typec,thin] (3.86,4.50) node[right]at(4.22,4.62) {\small 32K} rectangle (4.10,4.74)  ;
\end{tikzpicture}
}
\caption{ Speed-up over enumeration (in log-scale) for transitivity computation with increasing numbers of wedge samples.}
   \label{fig:gcc-time}
\end{figure}
As  seen in \Fig{gcc-time},  wedge sampling is orders of magnitude faster than the enumeration algorithm. 
The timing results show tremendous savings; for instance,   wedge
sampling only takes 0.026 seconds on \texttt{as-skitter} while full
enumeration takes 90 seconds.  

\begin{figure} 
\centering 
 \vspace*{-5ex}
\scalebox{1.0}{
\begin{tikzpicture}\draw [fill=\typea,thin] (0.000,0) rectangle (0.120,1.800) ;
\draw [fill=\typeb,thin] (0.130,0) node[below]{\rotatebox[origin=t]{90}{\scriptsize amazon0312}} rectangle (0.250,0.600) ;
\draw [fill=\typec,thin] (0.260,0) rectangle (0.380,0.060) ;
\draw [fill=\typea,thin] (0.490,0) rectangle (0.610,2.400) ;
\draw [fill=\typeb,thin] (0.620,0) node[below]{\rotatebox[origin=t]{90}{\scriptsize amazon0505}} rectangle (0.740,1.800) ;
\draw [fill=\typec,thin] (0.750,0) rectangle (0.870,0.600) ;
\draw [fill=\typea,thin] (0.980,0) rectangle (1.100,3.000) ;
\draw [fill=\typeb,thin] (1.110,0) node[below]{\rotatebox[origin=t]{90}{\scriptsize amazon0601}} rectangle (1.230,1.200) ;
\draw [fill=\typec,thin] (1.240,0) rectangle (1.360,0.600) ;
\draw [fill=\typea,thin] (1.470,0) rectangle (1.590,0.600) ;
\draw [fill=\typeb,thin] (1.600,0) node[below]{\rotatebox[origin=t]{90}{\scriptsize as-skitter}} rectangle (1.720,0.600) ;
\draw [fill=\typec,thin] (1.730,0) rectangle (1.850,0.600) ;
\draw [fill=\typea,thin] (1.960,0) rectangle (2.080,1.800) ;
\draw [fill=\typeb,thin] (2.090,0) node[below]{\rotatebox[origin=t]{90}{\scriptsize cit-Patents}} rectangle (2.210,0.060) ;
\draw [fill=\typec,thin] (2.220,0) rectangle (2.340,0.600) ;
\draw [fill=\typea,thin] (2.450,0) rectangle (2.570,0.600) ;
\draw [fill=\typeb,thin] (2.580,0) node[below]{\rotatebox[origin=t]{90}{\scriptsize roadNet-CA}} rectangle (2.700,1.200) ;
\draw [fill=\typec,thin] (2.710,0) rectangle (2.830,1.200) ;
\draw [fill=\typea,thin] (2.940,0) rectangle (3.060,1.200) ;
\draw [fill=\typeb,thin] (3.070,0) node[below]{\rotatebox[origin=t]{90}{\scriptsize web-BerkStan}} rectangle (3.190,0.600) ;
\draw [fill=\typec,thin] (3.200,0) rectangle (3.320,0.060) ;
\draw [fill=\typea,thin] (3.430,0) rectangle (3.550,0.060) ;
\draw [fill=\typeb,thin] (3.560,0) node[below]{\rotatebox[origin=t]{90}{\scriptsize web-Google}} rectangle (3.680,0.600) ;
\draw [fill=\typec,thin] (3.690,0) rectangle (3.810,0.600) ;
\draw [fill=\typea,thin] (3.920,0) rectangle (4.040,2.400) ;
\draw [fill=\typeb,thin] (4.050,0) node[below]{\rotatebox[origin=t]{90}{\scriptsize web-Stanford}} rectangle (4.170,0.600) ;
\draw [fill=\typec,thin] (4.180,0) rectangle (4.300,0.600) ;
\draw [fill=\typea,thin] (4.410,0) rectangle (4.530,1.200) ;
\draw [fill=\typeb,thin] (4.540,0) node[below]{\rotatebox[origin=t]{90}{\scriptsize wiki-Talk}} rectangle (4.660,0.600) ;
\draw [fill=\typec,thin] (4.670,0) rectangle (4.790,0.600) ;
\draw [fill=\typea,thin] (4.900,0) rectangle (5.020,0.600) ;
\draw [fill=\typeb,thin] (5.030,0) node[below]{\rotatebox[origin=t]{90}{\scriptsize youtube}} rectangle (5.150,0.060) ;
\draw [fill=\typec,thin] (5.160,0) rectangle (5.280,0.060) ;
\draw [fill=\typea,thin] (5.390,0) rectangle (5.510,1.200) ;
\draw [fill=\typeb,thin] (5.520,0) node[below]{\rotatebox[origin=t]{90}{\scriptsize flickr}} rectangle (5.640,0.600) ;
\draw [fill=\typec,thin] (5.650,0) rectangle (5.770,0.060) ;
\draw [fill=\typea,thin] (5.880,0) rectangle (6.000,1.800) ;
\draw [fill=\typeb,thin] (6.010,0) node[below]{\rotatebox[origin=t]{90}{\scriptsize livejournal}} rectangle (6.130,1.200) ;
\draw [fill=\typec,thin] (6.140,0) rectangle (6.260,0.060) ;
\draw [fill=\typea,thin] (6.570,0) rectangle (6.690,1.431) ;
\draw [fill=\typeb,thin] (6.700,0) node[below]{\rotatebox[origin=t]{90}{\scriptsize Average}} rectangle (6.820,0.738) ;
\draw [fill=\typec,thin] (6.830,0) rectangle (6.950,0.323) ;
\draw [<->, thick] (7.16,0) -- (0,0)-- (0,1.80) node[left]{\vspace{10ex}\hspace{-15ex}\rotatebox{90}{\small Absolute error}} -- (0, 3.60) ;
\draw [dashed] (0, 0.900) node [left]{\scriptsize $0.0015$} -- (6.980,0.900);
\draw [dashed] (0, 1.800) node [left]{\scriptsize$0.003$} -- (6.980,1.800);
\draw [dashed] (0, 2.700) node [left]{\scriptsize$0.0045$} -- (6.980,2.700);
\draw [fill=\typea,thin] (1.05,3.60) node[right] at(1.41,3.72) {\small 2K} rectangle (1.29,3.84)  ;
\draw [fill=\typeb,thin] (2.46,3.60) node[right]at(2.81,3.72) {\small 8K} rectangle (2.70,3.84)  ;
\draw [fill=\typec,thin] (3.86,3.60) node[right]at(4.22,3.72) {\small 32K} rectangle (4.10,3.84)  ;
\end{tikzpicture}
}
\caption{ Absolute error in transitivity for increasing numbers of wedge samples.}
\label{fig:gcc-error}
\end{figure}
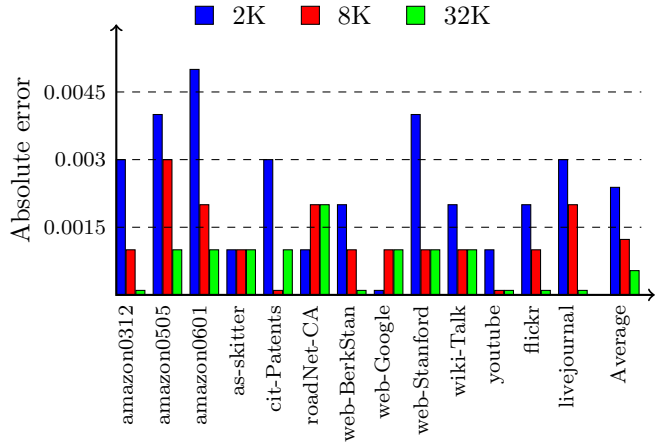

\Fig{gcc-error} shows the accuracy of  the wedge sampling algorithm. 
At 99.9\% confidence ($\delta=0.001$), the upper bound on the error we expect for 2K, 8K,
and 32K samples is .043, .022, and .011, respectively. Most of the
errors are much less than the bounds would suggest. For instance, the
maximum error for 2K samples is .007, much less than that 0.43 given
by the upper bound.

We show the rate of convergence in \Fig{amazon0505_cc} for the graph {\tt amazon0505} as the number of samples increases. The dashed
line shows the error bars at 99.9\% confidence.  Note that the convergence is fast and the error bars are conservative in this instance.
\begin{figure}[tp]
  \centering
  \includegraphics{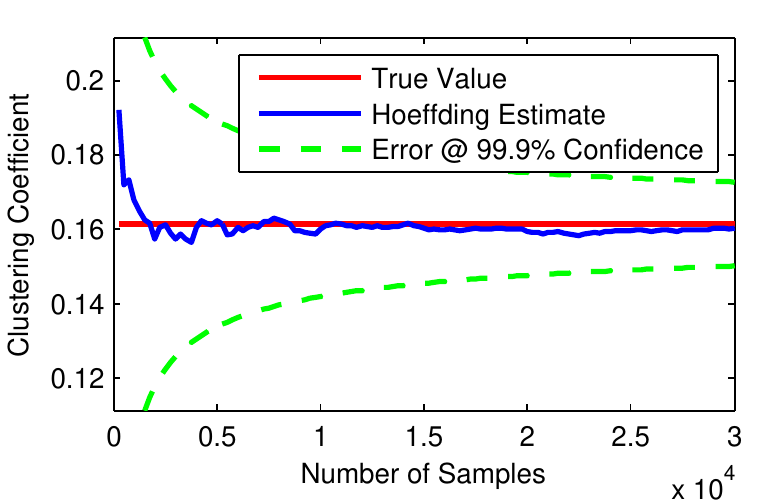}
  \caption{Convergence of transitivity estimate as the
    number of samples increases for  \texttt{amazon0505}.}
  \label{fig:amazon0505_cc}
\end{figure}

\section{Computing the local clustering coefficient} \label{sec:lcc}

We now demonstrate how a small change to the underlying distribution on wedges
allows us to compute the clustering coefficient,
$\lcc$. \Alg{lcc} shows
the procedure \alglcc{}.
The only difference between  \alglcc{} and \algkappa{} is in picking
random vertex centers. Vertices are picked uniformly instead of from the distribution $\{p_v\}$. 
\medskip
\begin{algorithm}
\caption{ Computing the local clustering coefficient (\alglcc{}) \label{alg:lcc}}
\begin{algorithmic}[1]
\STATE Pick $k$ uniform random vertices (with replacement).
\STATE For each selected vertex $v$, choose two
neighbors of $v$ (uniformly at random without replacement) to generate a random wedge. The set of all wedges comprises the sample set.
\STATE Output the fraction of closed wedges in the sample set as an estimate for $\lcc$.
\end{algorithmic}
\end{algorithm}

\begin{theorem} \label{thm:lcc} 
Set $k = \cei{0.5 \eps^{-2}\ln(2/\delta)}$. The algorithm \alglcc{}
outputs an estimate 
$\bar X$ for the clustering coefficient $\lcc$ such that 
$|\bar{X} - \lcc| < \eps$
with probability greater than $(1-\delta)$.
\end{theorem}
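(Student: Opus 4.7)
The plan is to mimic the proof of \Thm{kappa} almost verbatim, replacing the wedge-proportional distribution over centers by the uniform distribution. Define the random variables $X_1,\dots,X_k$ where $X_i = 1$ if the $i$-th sampled wedge is closed and $0$ otherwise, and let $\bar X = \frac{1}{k}\sum_i X_i$. Since the $k$ center vertices are drawn independently (with replacement), the $X_i$ are i.i.d.\ and Hoeffding's inequality (\Thm{Hoeffding}) applies as soon as we verify that $\mu \defeq \EX[X_i] = \lcc$. Then the stated sample complexity $k = \cei{0.5\eps^{-2}\ln(2/\delta)}$ follows by the same algebra used after \Thm{Hoeffding}.

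The only real computation, then, is $\EX[X_i] = \lcc$. First I would condition on which vertex $v$ is selected as the center in iteration $i$. Under \alglcc{}, this happens with probability $1/n$. Conditional on center $v$, step~2 of the algorithm produces a uniform random wedge centered at $v$ (just as in \Clm{random}, the pair of neighbors is uniform over the ${d_v\choose 2}=W_v$ possible pairs). The number of those wedges that are closed is exactly $T_v$, so
\begin{equation*}
\Prob{X_i = 1 \mid \text{center} = v} \;=\; \frac{T_v}{W_v} \;=\; C_v.
\end{equation*}
Summing over the $n$ equally likely choices of center gives
\begin{equation*}
\EX[X_i] \;=\; \sum_{v} \frac{1}{n}\, C_v \;=\; \lcc,
\end{equation*}
as required. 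Combining with \Thm{Hoeffding} yields $\Prob{|\bar X - \lcc| \ge \eps} \le 2\exp(-2k\eps^2) \le \delta$.

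The one wrinkle worth noting (this is the only mildly subtle point) is what to do when the selected center has $d_v < 2$, since no wedge exists there. Two standard fixes keep the analysis clean: either adopt the convention $C_v = 0$ whenever $W_v = 0$ and have the algorithm contribute $X_i = 0$ in that case, or pre-restrict the uniform sampling to vertices with $d_v \ge 2$ and reinterpret $n$ accordingly. Either convention makes the conditional computation above go through verbatim; no further changes to the argument are needed, so this is a cosmetic issue rather than a real obstacle. \qed
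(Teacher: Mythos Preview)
Your proof is correct and follows essentially the same route as the paper: condition on the uniformly chosen center $v$, observe that the conditional closure probability is $C_v$, average to get $\lcc$, and apply Hoeffding. Your explicit treatment of the $d_v<2$ case is a welcome clarification that the paper leaves implicit.
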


\begin{proof} 
Let us consider a single sampled wedge $w$, and let $X(w)$
be the indicator random variable for the wedge being closed. Let $\cV$
be the uniform distribution on edges. For any vertex $v$, let $\cN_v$
be the uniform distribution on pairs of neighbors of $v$.
Observe that 
$$ {\EX[X]} = \Pr_{v \sim \cV}[\Pr_{(u,u') \sim \cN_v}[\textrm{wedge $\{(u,v),(u',v)\}$ is closed}]] $$
We will show that this is exactly $\lcc$.
\begin{eqnarray*}
\lcc & = & n^{-1} \sum_v C_v = \EX_{v \sim \cV} [C_v] \\
& = & \EX_{v \sim \cV} [\textrm{frac. of closed wedges centered at $v$}] \\
& = & \EX_{v \sim \cV} [\Pr_{(u,u') \sim \cN_v}[\textrm{$\{(u,v),(u',v)\}$ is closed}]] \\
& = & \Pr_{v \sim \cV}[\Pr_{(u,u') \sim \cN_v}[\textrm{$\{(u,v),(u',v)\}$ is closed}]]
= \EX[X]
\end{eqnarray*}
For a single sample, the probability that the wedge is closed is exactly $\lcc$.
The bound of \Thm{Hoeffding} completes the proof.
\qed
\end{proof}

\Fig{lcc} presents  the results of our experiments for computing the clustering coefficients. 
Experimental setup and the notation are  the same as before. 
The results again show that wedge sampling provides accurate
estimations with tremendous improvements in runtime. For instance, for 8K samples, the average speedup is 10K. 

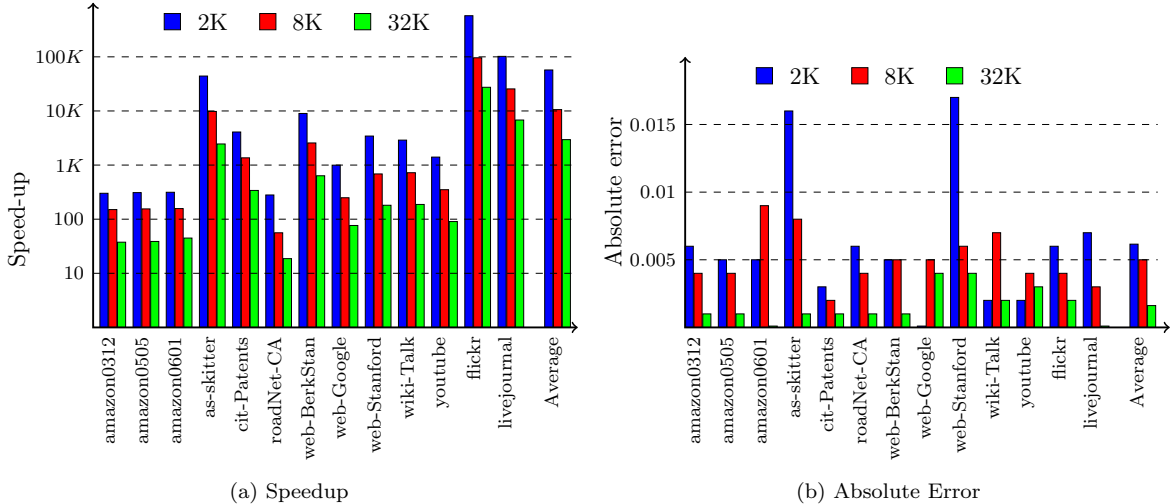
\begin{figure}[ht]
  \centering
\subfloat[Speedup]{\label{fig:lcc-times}\scalebox{0.9}{
\begin{tikzpicture}\draw [fill=\typea,thin] (0.100,0) rectangle (0.220,1.983) ;
\draw [fill=\typeb,thin] (0.230,0) node[below]{\rotatebox[origin=t]{90}{\scriptsize amazon0312}} rectangle (0.350,1.742) ;
\draw [fill=\typec,thin] (0.360,0) rectangle (0.480,1.260) ;
\draw [fill=\typea,thin] (0.590,0) rectangle (0.710,1.993) ;
\draw [fill=\typeb,thin] (0.720,0) node[below]{\rotatebox[origin=t]{90}{\scriptsize amazon0505}} rectangle (0.840,1.752) ;
\draw [fill=\typec,thin] (0.850,0) rectangle (0.970,1.271) ;
\draw [fill=\typea,thin] (1.080,0) rectangle (1.200,1.998) ;
\draw [fill=\typeb,thin] (1.210,0) node[below]{\rotatebox[origin=t]{90}{\scriptsize amazon0601}} rectangle (1.330,1.757) ;
\draw [fill=\typec,thin] (1.340,0) rectangle (1.460,1.321) ;
\draw [fill=\typea,thin] (1.570,0) rectangle (1.690,3.716) ;
\draw [fill=\typeb,thin] (1.700,0) node[below]{\rotatebox[origin=t]{90}{\scriptsize as-skitter}} rectangle (1.820,3.193) ;
\draw [fill=\typec,thin] (1.830,0) rectangle (1.950,2.712) ;
\draw [fill=\typea,thin] (2.060,0) rectangle (2.180,2.889) ;
\draw [fill=\typeb,thin] (2.190,0) node[below]{\rotatebox[origin=t]{90}{\scriptsize cit-Patents}} rectangle (2.310,2.507) ;
\draw [fill=\typec,thin] (2.320,0) rectangle (2.440,2.025) ;
\draw [fill=\typea,thin] (2.550,0) rectangle (2.670,1.958) ;
\draw [fill=\typeb,thin] (2.680,0) node[below]{\rotatebox[origin=t]{90}{\scriptsize roadNet-CA}} rectangle (2.800,1.399) ;
\draw [fill=\typec,thin] (2.810,0) rectangle (2.930,1.017) ;
\draw [fill=\typea,thin] (3.040,0) rectangle (3.160,3.163) ;
\draw [fill=\typeb,thin] (3.170,0) node[below]{\rotatebox[origin=t]{90}{\scriptsize web-BerkStan}} rectangle (3.290,2.727) ;
\draw [fill=\typec,thin] (3.300,0) rectangle (3.420,2.242) ;
\draw [fill=\typea,thin] (3.530,0) rectangle (3.650,2.399) ;
\draw [fill=\typeb,thin] (3.660,0) node[below]{\rotatebox[origin=t]{90}{\scriptsize web-Google}} rectangle (3.780,1.917) ;
\draw [fill=\typec,thin] (3.790,0) rectangle (3.910,1.507) ;
\draw [fill=\typea,thin] (4.020,0) rectangle (4.140,2.829) ;
\draw [fill=\typeb,thin] (4.150,0) node[below]{\rotatebox[origin=t]{90}{\scriptsize web-Stanford}} rectangle (4.270,2.269) ;
\draw [fill=\typec,thin] (4.280,0) rectangle (4.400,1.806) ;
\draw [fill=\typea,thin] (4.510,0) rectangle (4.630,2.769) ;
\draw [fill=\typeb,thin] (4.640,0) node[below]{\rotatebox[origin=t]{90}{\scriptsize wiki-Talk}} rectangle (4.760,2.287) ;
\draw [fill=\typec,thin] (4.770,0) rectangle (4.890,1.818) ;
\draw [fill=\typea,thin] (5.000,0) rectangle (5.120,2.518) ;
\draw [fill=\typeb,thin] (5.130,0) node[below]{\rotatebox[origin=t]{90}{\scriptsize youtube}} rectangle (5.250,2.036) ;
\draw [fill=\typec,thin] (5.260,0) rectangle (5.380,1.565) ;
\draw [fill=\typea,thin] (5.490,0) rectangle (5.610,4.608) ;
\draw [fill=\typeb,thin] (5.620,0) node[below]{\rotatebox[origin=t]{90}{\scriptsize flickr}} rectangle (5.740,3.986) ;
\draw [fill=\typec,thin] (5.750,0) rectangle (5.870,3.550) ;
\draw [fill=\typea,thin] (5.980,0) rectangle (6.100,4.007) ;
\draw [fill=\typeb,thin] (6.110,0) node[below]{\rotatebox[origin=t]{90}{\scriptsize livejournal}} rectangle (6.230,3.526) ;
\draw [fill=\typec,thin] (6.240,0) rectangle (6.360,3.066) ;
\draw [fill=\typea,thin] (6.670,0) rectangle (6.790,3.806) ;
\draw [fill=\typeb,thin] (6.800,0) node[below]{\rotatebox[origin=t]{90}{\scriptsize Average}} rectangle (6.920,3.220) ;
\draw [fill=\typec,thin] (6.930,0) rectangle (7.050,2.776) ;
\draw [<->, thick] (7.16,0) -- (0,0)-- (0,0.00) node[left] at (-0.8, 1.60) {\rotatebox{90}{\small Speed-up}} -- (0, 4.80) ;
\draw [dashed] (0, 0.800) node [left]{\scriptsize $10$} -- (7.080,0.800);
\draw [dashed] (0, 1.600) node [left]{\scriptsize$100$} -- (7.080,1.600);
\draw [dashed] (0, 2.400) node [left]{\scriptsize$1K$} -- (7.080,2.400);
\draw [dashed] (0, 3.200) node [left]{\scriptsize$10K$} -- (7.080,3.200);
\draw [dashed] (0, 4.000) node [left]{\scriptsize$100K$} -- (7.080,4.000);
\draw [fill=\typea,thin] (1.05,4.40) node[right] at(1.41,4.52) {\small 2K} rectangle (1.29,4.64)  ;
\draw [fill=\typeb,thin] (2.46,4.40) node[right]at(2.81,4.52) {\small 8K} rectangle (2.70,4.64)  ;
\draw [fill=\typec,thin] (3.86,4.40) node[right]at(4.22,4.52) {\small 32K} rectangle (4.10,4.64)  ;
\end{tikzpicture}
}}
\subfloat[Absolute Error]{\scalebox{0.9}{ 
\begin{tikzpicture}\draw [fill=\typea,thin] (0.000,0) rectangle (0.120,1.200) ;
\draw [fill=\typeb,thin] (0.130,0) node[below]{\rotatebox[origin=t]{90}{\scriptsize amazon0312}} rectangle (0.250,0.800) ;
\draw [fill=\typec,thin] (0.260,0) rectangle (0.380,0.200) ;
\draw [fill=\typea,thin] (0.490,0) rectangle (0.610,1.000) ;
\draw [fill=\typeb,thin] (0.620,0) node[below]{\rotatebox[origin=t]{90}{\scriptsize amazon0505}} rectangle (0.740,0.800) ;
\draw [fill=\typec,thin] (0.750,0) rectangle (0.870,0.200) ;
\draw [fill=\typea,thin] (0.980,0) rectangle (1.100,1.000) ;
\draw [fill=\typeb,thin] (1.110,0) node[below]{\rotatebox[origin=t]{90}{\scriptsize amazon0601}} rectangle (1.230,1.800) ;
\draw [fill=\typec,thin] (1.240,0) rectangle (1.360,0.020) ;
\draw [fill=\typea,thin] (1.470,0) rectangle (1.590,3.200) ;
\draw [fill=\typeb,thin] (1.600,0) node[below]{\rotatebox[origin=t]{90}{\scriptsize as-skitter}} rectangle (1.720,1.600) ;
\draw [fill=\typec,thin] (1.730,0) rectangle (1.850,0.200) ;
\draw [fill=\typea,thin] (1.960,0) rectangle (2.080,0.600) ;
\draw [fill=\typeb,thin] (2.090,0) node[below]{\rotatebox[origin=t]{90}{\scriptsize cit-Patents}} rectangle (2.210,0.400) ;
\draw [fill=\typec,thin] (2.220,0) rectangle (2.340,0.200) ;
\draw [fill=\typea,thin] (2.450,0) rectangle (2.570,1.200) ;
\draw [fill=\typeb,thin] (2.580,0) node[below]{\rotatebox[origin=t]{90}{\scriptsize roadNet-CA}} rectangle (2.700,0.800) ;
\draw [fill=\typec,thin] (2.710,0) rectangle (2.830,0.200) ;
\draw [fill=\typea,thin] (2.940,0) rectangle (3.060,1.000) ;
\draw [fill=\typeb,thin] (3.070,0) node[below]{\rotatebox[origin=t]{90}{\scriptsize web-BerkStan}} rectangle (3.190,1.000) ;
\draw [fill=\typec,thin] (3.200,0) rectangle (3.320,0.200) ;
\draw [fill=\typea,thin] (3.430,0) rectangle (3.550,0.020) ;
\draw [fill=\typeb,thin] (3.560,0) node[below]{\rotatebox[origin=t]{90}{\scriptsize web-Google}} rectangle (3.680,1.000) ;
\draw [fill=\typec,thin] (3.690,0) rectangle (3.810,0.800) ;
\draw [fill=\typea,thin] (3.920,0) rectangle (4.040,3.400) ;
\draw [fill=\typeb,thin] (4.050,0) node[below]{\rotatebox[origin=t]{90}{\scriptsize web-Stanford}} rectangle (4.170,1.200) ;
\draw [fill=\typec,thin] (4.180,0) rectangle (4.300,0.800) ;
\draw [fill=\typea,thin] (4.410,0) rectangle (4.530,0.400) ;
\draw [fill=\typeb,thin] (4.540,0) node[below]{\rotatebox[origin=t]{90}{\scriptsize wiki-Talk}} rectangle (4.660,1.400) ;
\draw [fill=\typec,thin] (4.670,0) rectangle (4.790,0.400) ;
\draw [fill=\typea,thin] (4.900,0) rectangle (5.020,0.400) ;
\draw [fill=\typeb,thin] (5.030,0) node[below]{\rotatebox[origin=t]{90}{\scriptsize youtube}} rectangle (5.150,0.800) ;
\draw [fill=\typec,thin] (5.160,0) rectangle (5.280,0.600) ;
\draw [fill=\typea,thin] (5.390,0) rectangle (5.510,1.200) ;
\draw [fill=\typeb,thin] (5.520,0) node[below]{\rotatebox[origin=t]{90}{\scriptsize flickr}} rectangle (5.640,0.800) ;
\draw [fill=\typec,thin] (5.650,0) rectangle (5.770,0.400) ;
\draw [fill=\typea,thin] (5.880,0) rectangle (6.000,1.400) ;
\draw [fill=\typeb,thin] (6.010,0) node[below]{\rotatebox[origin=t]{90}{\scriptsize livejournal}} rectangle (6.130,0.600) ;
\draw [fill=\typec,thin] (6.140,0) rectangle (6.260,0.020) ;
\draw [fill=\typea,thin] (6.570,0) rectangle (6.690,1.231) ;
\draw [fill=\typeb,thin] (6.700,0) node[below]{\rotatebox[origin=t]{90}{\scriptsize Average}} rectangle (6.820,1.000) ;
\draw [fill=\typec,thin] (6.830,0) rectangle (6.950,0.323) ;
\draw [<->, thick] (7.16,0) -- (0,0)-- (0,0.60) node[left] at (-0.8, 2.00) {\rotatebox{90}{\small Absolute error}} -- (0, 4.00) ;
\draw [dashed] (0, 1.000) node [left]{\scriptsize $0.005$} -- (6.980,1.000);
\draw [dashed] (0, 2.000) node [left]{\scriptsize$0.01$} -- (6.980,2.000);
\draw [dashed] (0, 3.000) node [left]{\scriptsize$0.015$} -- (6.980,3.000);
\draw [fill=\typea,thin] (1.05,3.60) node[right] at(1.41,3.72) {\small 2K} rectangle (1.29,3.84)  ;
\draw [fill=\typeb,thin] (2.46,3.60) node[right]at(2.81,3.72) {\small 8K} rectangle (2.70,3.84)  ;
\draw [fill=\typec,thin] (3.86,3.60) node[right]at(4.22,3.72) {\small 32K} rectangle (4.10,3.84)  ;
\end{tikzpicture}
}
}
 \caption{Local clustering coefficient results for increasing numbers of wedge samples. }
\label{fig:lcc}
\end{figure}
	
\section{Computing degree-wise clustering coefficients} 
\label{sec:deg-cc}

\begin{figure}[tb]
\centering
 
  \subfloat{\label{fig:ccd:amazon0505}
    \includegraphics[width=1.6in,trim=0 0 0 0]{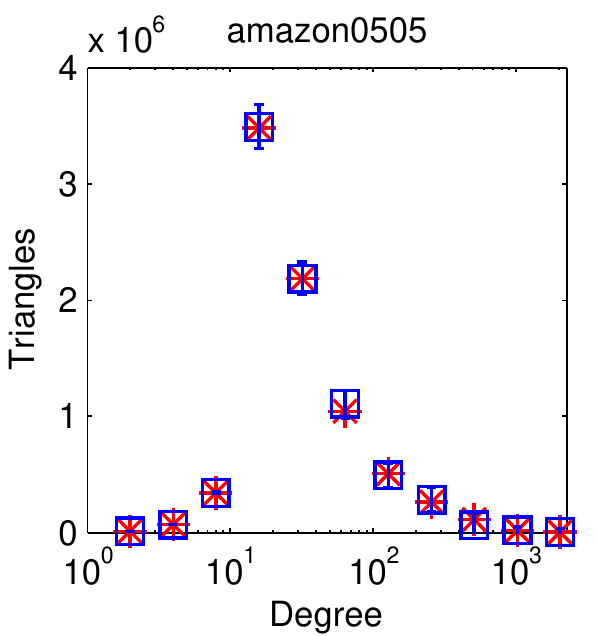}
  }
    \subfloat{\label{fig:ccd:amazon0601}
    \includegraphics[width=1.6in,trim=0 0 0 0]{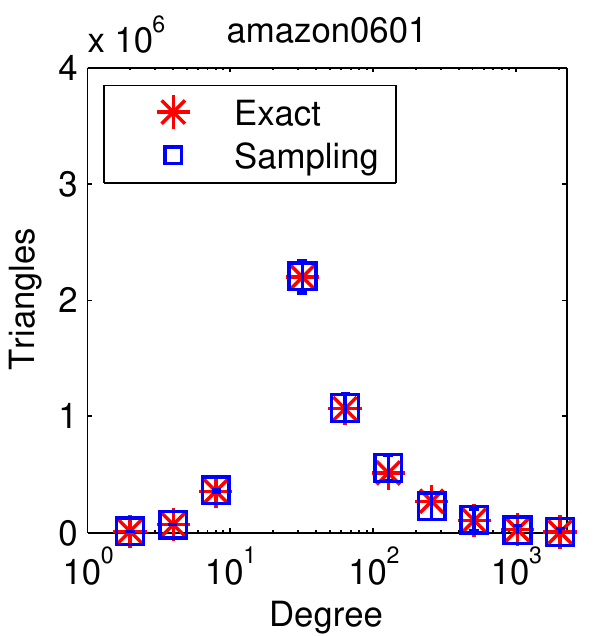}
  }
  \subfloat{\label{fig:ccd:as-skitter}
    \includegraphics[width=1.6in,trim=0 0 0 0]{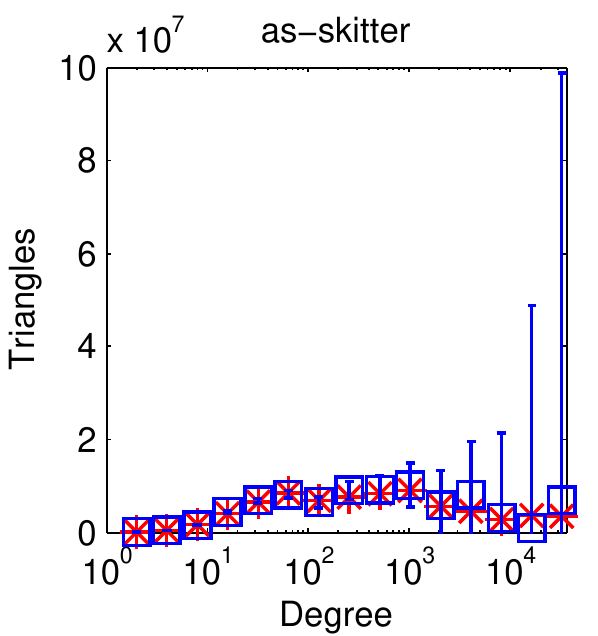}
  }
  \subfloat{\label{fig:ccd:cit-Patents}
    \includegraphics[width=1.6in,trim=0 0 0 0]{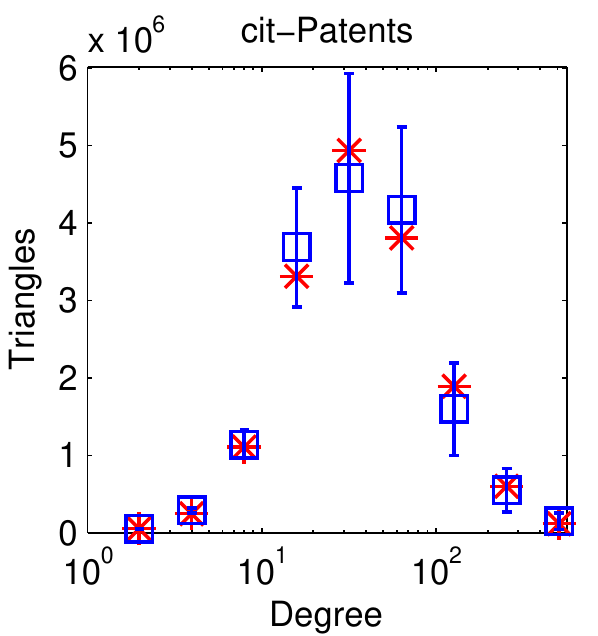}
  }  \\
  \subfloat{\label{fig:ccd:web-BerkStan}
    \includegraphics[width=1.6in,trim=0 0 0 0]{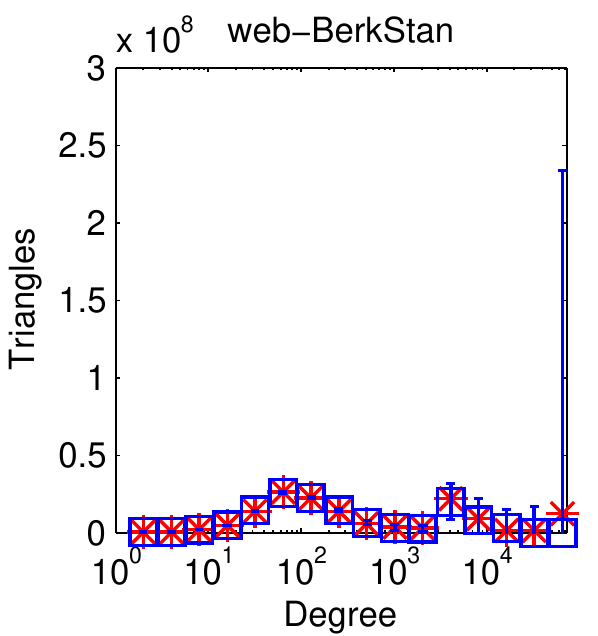}
  }
    \subfloat{\label{fig:ccd:web-Google}
    \includegraphics[width=1.6in,trim=0 0 0 0]{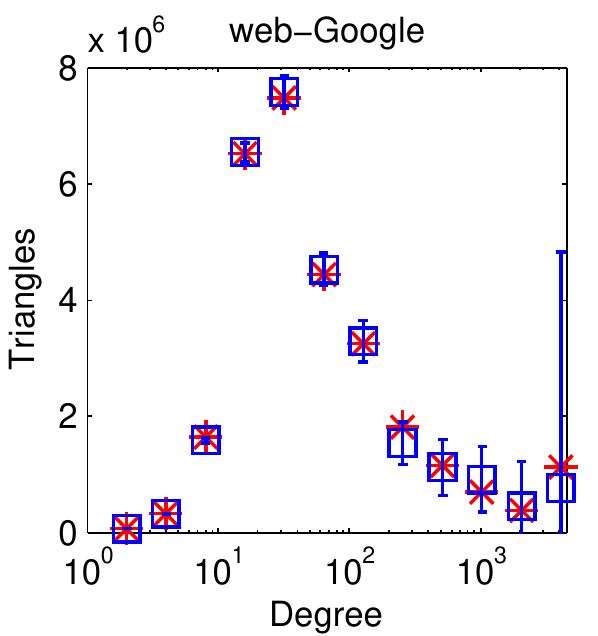}
  }
  \subfloat{\label{fig:ccd:web-Stanford}
    \includegraphics[width=1.6in,trim=0 0 0 0]{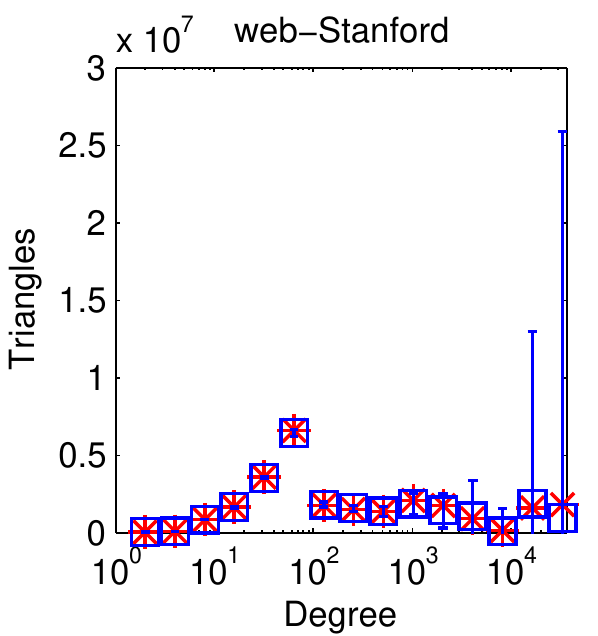}
  }
  \subfloat{\label{fig:ccd:wiki-Talk}
    \includegraphics[width=1.6in,trim=0 0 0 0]{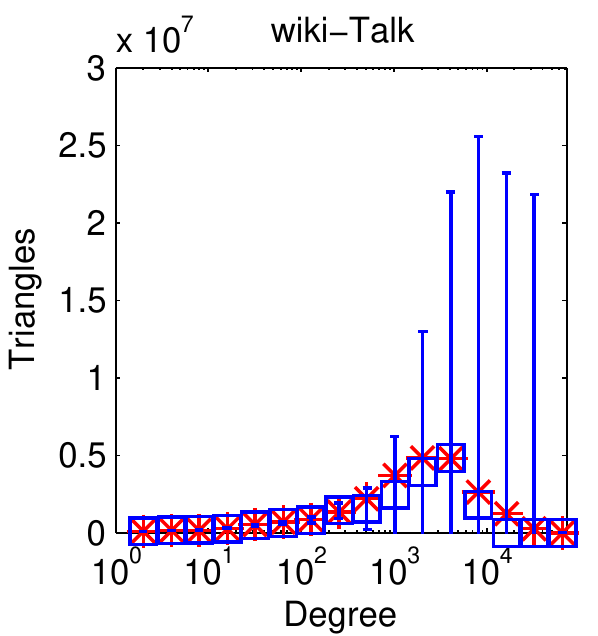}
  }     \\
    \subfloat{
    \includegraphics[width=1.6in,trim=0 0 0 0]{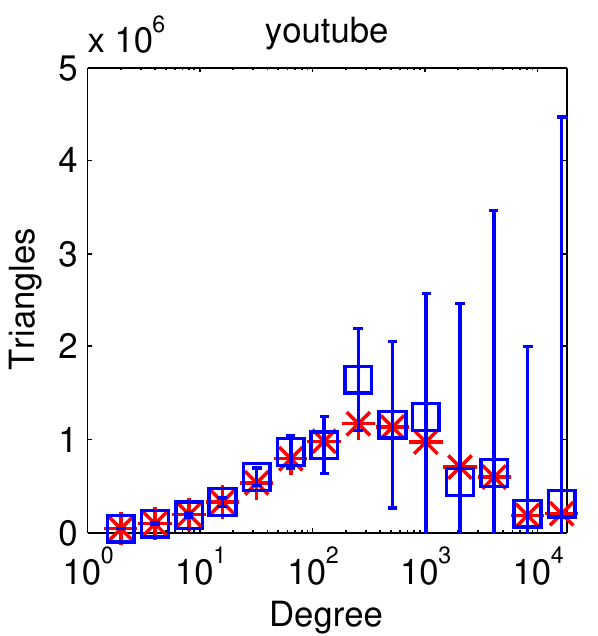}
  }
    \subfloat{
    \includegraphics[width=1.6in,trim=0 0 0 0]{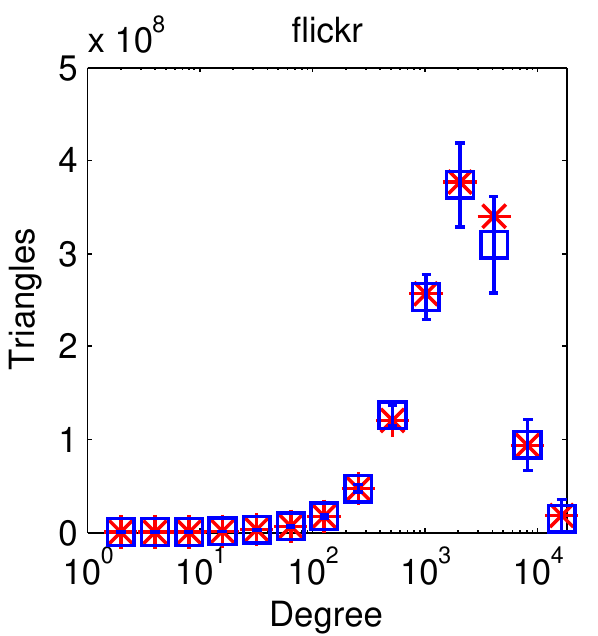}
  }
   \subfloat{
    \includegraphics[width=1.6in,trim=0 0 0 0]{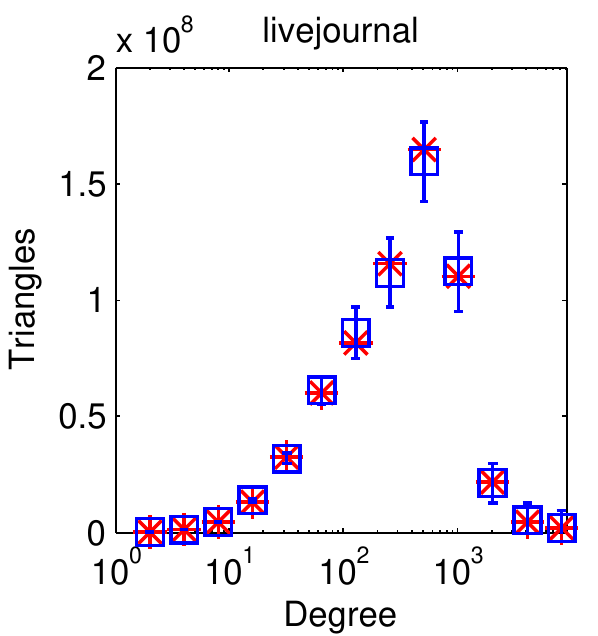}
  }
  
\caption{Computing degree-wise clustering coefficients using wedge
  sampling}
\label{fig:ccd}
\end{figure}

The real power of wedge sampling is demonstrated by estimating the
degree-wise clustering coefficients
$\{\dcc_d\}$. \Alg{dcc} shows procedure \algdcc.

\begin{algorithm}[htpb]
\caption{ Computing clustering coefficients by degree (\algdcc \label{alg:dcc})}
\begin{algorithmic}[1]
\STATE	 Pick $k$ uniform random vertices of degree $d$ (with replacement).
\STATE	 For each selected vertex $v$, choose a uniform random pair of neighbors of $v$ to generate a wedge.choose two
neighbors of $v$ (uniformly at random without replacement) to generate a random wedge. The set of all wedges comprises the sample set.
\STATE	 Output the fraction of closed wedges in the sample set as an estimate for $\dcc_d$.
\end{algorithmic}
\end{algorithm}

\begin{theorem} \label{thm:dcc} 
Set $k = \cei{0.5 \eps^{-2}\ln(2/\delta)}$. The algorithm \algdcc{} outputs an estimate
$\bar X$ for the clustering coefficient $\dcc_d$ such that
$|\bar{X} - \dcc_d| < \eps$
with probability greater than $(1-\delta)$.
\end{theorem}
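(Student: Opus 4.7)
The plan is to mirror the proof structure of \Thm{lcc}, but restricting the vertex distribution to vertices of degree exactly $d$. Let $w$ denote a random wedge generated by a single iteration of \algdcc{}, and let $X(w)$ be the indicator that $w$ is closed. We want to show $\EX[X] = \dcc_d$; then an application of \Thm{Hoeffding} to the $k$ i.i.d.\ indicator variables $X_1,\dots,X_k$ with $\bar{X} = k^{-1}\sum_i X_i$ yields the claimed tail bound.

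First I would set up notation: let $\cV_d$ denote the uniform distribution on $V_d$ (vertices of degree $d$), and for each $v \in V_d$ let $\cN_v$ denote the uniform distribution on unordered pairs of neighbors of $v$. By Step~1 and Step~2 of \algdcc{}, a sample is drawn by first picking $v \sim \cV_d$ and then picking $(u,u') \sim \cN_v$; the wedge $\{(u,v),(u',v)\}$ is closed exactly when $\{u,u'\}$ is an edge. So
\begin{equation*}
\EX[X] \;=\; \Pr_{v \sim \cV_d}\!\bigl[\Pr_{(u,u')\sim \cN_v}[\{(u,v),(u',v)\}\text{ closed}]\bigr].
\end{equation*}

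Next I would identify the inner probability with the local clustering coefficient $C_v$. By the definition in \Tab{notation}, $C_v = T_v/W_v$ is the fraction of wedges centered at $v$ that close, and a uniform random pair of neighbors of $v$ is exactly a uniform random wedge centered at $v$, so $\Pr_{(u,u')\sim \cN_v}[\text{closed}] = C_v$. Taking the expectation over the uniform distribution on $V_d$ yields
\begin{equation*}
\EX[X] \;=\; \EX_{v \sim \cV_d}[C_v] \;=\; \frac{1}{n_d}\sum_{v \in V_d} C_v \;=\; \dcc_d,
\end{equation*}
by the definition of $\dcc_d$. Since the $X_i$ are independent and bounded in $[0,1]$, \Thm{Hoeffding} gives $\Pr[|\bar{X} - \dcc_d| \geq \eps] \leq 2\exp(-2k\eps^2)$, and the choice $k = \cei{0.5\eps^{-2}\ln(2/\delta)}$ makes the right-hand side at most $\delta$.

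There is no real obstacle here: the only substantive step is recognizing that restricting the underlying vertex distribution from uniform-on-$V$ (as in \alglcc) to uniform-on-$V_d$ changes the expectation of $X$ from $\lcc$ to $\dcc_d$, after which the Hoeffding bound is plug-and-play. The one minor sanity check worth noting is the implicit assumption $d \geq 2$ and $n_d \geq 1$, which is needed so that (i) each selected vertex has a wedge to sample (i.e., $\binom{d}{2} \geq 1$) and (ii) uniform sampling from $V_d$ is well defined; under these conditions the proof goes through verbatim. \qed
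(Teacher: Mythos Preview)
Your proposal is correct and follows exactly the same approach as the paper: the paper's proof simply states that the argument is analogous to \Thm{lcc}, restricting to degree-$d$ vertices, and your writeup spells out precisely that restriction. Your added sanity check on $d \geq 2$ and $n_d \geq 1$ is a reasonable clarification that the paper leaves implicit.
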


\begin{proof} The proof is analogous to that
of \Thm{lcc}. Since $\dcc_d$ is the average clustering coefficient of a 
degree-$d$ vertex, we can apply the same arguments as in \Thm{lcc}.
\qed
\end{proof}

Algorithms in the previous section present how to compute the
clustering coefficient of vertices of a given degree.  In practice, it
may be sufficient to compute clustering coefficients over bins of
degrees.  Wedge sampling algorithms can be performed for bins of
degrees by a small adjustment of the sampling.
Within each bin, we weight each vertex according to the number of
wedges it produces. This 
guarantees that each wedge in the bin is equally likely to be selected. For instance, if we bin degree-3 and degree-4 vertices together, we will weight a degree-4 vertex twice as much as a degree 3-vertex, since a degree vertex generates ${3 \choose 2} =3$ wedges whereas a degree-4 vertex has  ${4 \choose 2} =6$. 
See \cite{KoPiPlSe13} for details of binned computations.

\Fig{ccd} shows results of \Alg{dcc} on three graphs for clustering coefficients.
In these experiments, we chose to group vertices in logarithmic
bins of degrees, i.e., $\set{2}, \set{3,4}, \set{5,6,7,8}, \dots$.
In other words, $2^{i-1} <d_v \leq 2^i$ form
the $i$-th bin.  The same algorithm  can be used for an arbitrary binning of the vertices.  
We show the estimates with increasing number of samples. At 8K
samples, the error is expected to be less than 0.02, which is
apparent in the plots.
Observe that even 500 samples yields a reasonable estimate in terms of
the differences by degree.

\Fig{ccdtimes} shows the time to calculate the binned $C_d$ values compared to enumeration; there is a
tremendous savings in runtime as a result of using wedge sampling. In
this figure, runtimes are normalized with respect to the runtimes of
full enumeration.  As the figure shows, wedge sampling takes only a
tiny fraction of the time of full enumeration, especially for large
graphs.

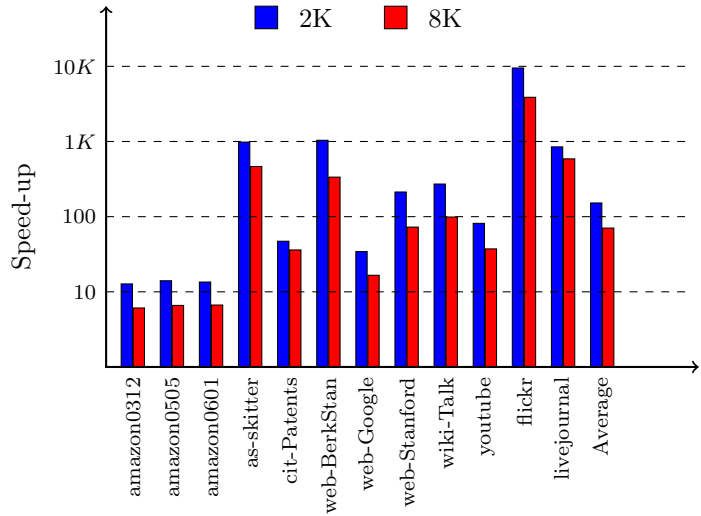
\begin{figure}[htpb]
  \centering
\begin{tikzpicture}\draw [fill=\typea,thin] (0.200,0) rectangle (0.350,1.105) ;
\draw [fill=\typeb,thin] (0.360,0) node[below]{\rotatebox[origin=t]{90}{\scriptsize amazon0312}} rectangle (0.510,0.785) ;
\draw [fill=\typea,thin] (0.720,0) rectangle (0.870,1.147) ;
\draw [fill=\typeb,thin] (0.880,0) node[below]{\rotatebox[origin=t]{90}{\scriptsize amazon0505}} rectangle (1.030,0.819) ;
\draw [fill=\typea,thin] (1.240,0) rectangle (1.390,1.130) ;
\draw [fill=\typeb,thin] (1.400,0) node[below]{\rotatebox[origin=t]{90}{\scriptsize amazon0601}} rectangle (1.550,0.824) ;
\draw [fill=\typea,thin] (1.760,0) rectangle (1.910,2.992) ;
\draw [fill=\typeb,thin] (1.920,0) node[below]{\rotatebox[origin=t]{90}{\scriptsize as-skitter}} rectangle (2.070,2.667) ;
\draw [fill=\typea,thin] (2.280,0) rectangle (2.430,1.673) ;
\draw [fill=\typeb,thin] (2.440,0) node[below]{\rotatebox[origin=t]{90}{\scriptsize cit-Patents}} rectangle (2.590,1.557) ;
\draw [fill=\typea,thin] (2.800,0) rectangle (2.950,3.016) ;
\draw [fill=\typeb,thin] (2.960,0) node[below]{\rotatebox[origin=t]{90}{\scriptsize web-BerkStan}} rectangle (3.110,2.526) ;
\draw [fill=\typea,thin] (3.320,0) rectangle (3.470,1.535) ;
\draw [fill=\typeb,thin] (3.480,0) node[below]{\rotatebox[origin=t]{90}{\scriptsize web-Google}} rectangle (3.630,1.220) ;
\draw [fill=\typea,thin] (3.840,0) rectangle (3.990,2.328) ;
\draw [fill=\typeb,thin] (4.000,0) node[below]{\rotatebox[origin=t]{90}{\scriptsize web-Stanford}} rectangle (4.150,1.861) ;
\draw [fill=\typea,thin] (4.360,0) rectangle (4.510,2.434) ;
\draw [fill=\typeb,thin] (4.520,0) node[below]{\rotatebox[origin=t]{90}{\scriptsize wiki-Talk}} rectangle (4.670,1.995) ;
\draw [fill=\typea,thin] (4.880,0) rectangle (5.030,1.911) ;
\draw [fill=\typeb,thin] (5.040,0) node[below]{\rotatebox[origin=t]{90}{\scriptsize youtube}} rectangle (5.190,1.571) ;
\draw [fill=\typea,thin] (5.400,0) rectangle (5.550,3.978) ;
\draw [fill=\typeb,thin] (5.560,0) node[below]{\rotatebox[origin=t]{90}{\scriptsize flickr}} rectangle (5.710,3.588) ;
\draw [fill=\typea,thin] (5.920,0) rectangle (6.070,2.928) ;
\draw [fill=\typeb,thin] (6.080,0) node[below]{\rotatebox[origin=t]{90}{\scriptsize livejournal}} rectangle (6.230,2.769) ;
\draw [fill=\typea,thin] (6.440,0) rectangle (6.590,2.182) ;
\draw [fill=\typeb,thin] (6.600,0) node[below]{\rotatebox[origin=t]{90}{\scriptsize Average}} rectangle (6.750,1.848) ;
\draw [fill=\typea,thin] (7.360,0) rectangle (7.510,0.000) ;
\draw [fill=\typeb,thin] (7.520,0) node[below]{\rotatebox[origin=t]{90}{\scriptsize }} rectangle (7.670,0.000) ;
\draw [<->, thick] (7.88,0) -- (0,0)-- (0,0.00) node[left] at (-0.8, 2.00) {\rotatebox{90}{\small Speed-up}} -- (0, 4.80) ;
\draw [dashed] (0, 1.000) node [left]{\scriptsize $10$} -- (7.700,1.000);
\draw [dashed] (0, 2.000) node [left]{\scriptsize$100$} -- (7.700,2.000);
\draw [dashed] (0, 3.000) node [left]{\scriptsize$1K$} -- (7.700,3.000);
\draw [dashed] (0, 4.000) node [left]{\scriptsize$10K$} -- (7.700,4.000);
\draw [fill=\typea,thin] (1.98,4.50) node[right] at(2.43,4.65) {\small 2K} rectangle (2.28,4.80)  ;
\draw [fill=\typeb,thin] (3.70,4.50) node[right]at(4.15,4.65) {\small 8K} rectangle (4.00,4.80)  ;
\end{tikzpicture}
  \caption{Speed-up in binned degree-wise clustering coefficient computation time for increasing numbers of wedge samples}
  \label{fig:ccdtimes}
\end{figure}

\section{Counting undirected triangles per degree}
\label{sec:trid}

By modifying the template given in \Sec{wedge}, we can also get estimates
for $T_d$ (the number of triangles
incident to degree-$d$ vertices). 
Instead of counting the fraction of closed wedges,
we take a weighted sum. \Alg{trid} describes the procedure \algtrid.
We let
$W_d = n_d \cdot {d \choose 2}$ denote the total number of wedges centered at degree-$d$ vertices.

\begin{algorithm}[htbp]
  \caption{ Computing triangles by degree (\algtrid{}) \label{alg:trid}}
  \begin{algorithmic}[1]
    \STATE Pick $k$ uniform random vertices of degree $d$ (with replacement).
    \STATE For each selected vertex $v$, choose two
neighbors of $v$ (uniformly at random without replacement) to generate a random wedge. The set of all wedges comprises the sample set.
    \STATE For each wedge $w_i$ in the sample set, let $Y_i$ be the associated random
    variable such that
    \begin{displaymath}
    Y_i =
    \begin{cases}
    0 & \text{if $w$ is open},\\
    \frac{1}{3} & \text{if $w$ is closed and has 3 vertices in $V_d$},\\
    \frac{1}{2} & \text{if $w$ is closed and has 2 vertices in $V_d$},\\
    1 & \text{if $w$ is closed and has 1 vertex in $V_d$}.
    \end{cases}
    \end{displaymath}
    \STATE Set $\bar Y = \frac{1}{k} \sum_i Y_i$.
    \STATE Output $W_d \cdot \bar{Y}$ as the estimate for $T_d$.
  \end{algorithmic}
\end{algorithm}
\medskip

\begin{theorem} \label{thm:dcc2} 
  Set $k = \cei{0.5 \eps^{-2}\ln(2/\delta)}$. 
  The algorithm \algtrid{} outputs an estimate
  $W_d \cdot \bar Y$ for the $T_d$ with the following guarantee:
  $|W_d \cdot \bar{Y} - T_d| < \eps W_d$
  with probability greater than $1-\delta$.
\end{theorem}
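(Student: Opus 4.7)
The plan is to reduce everything to \Thm{Hoeffding} by showing that $\EX[Y_i] = T_d/W_d$, after which the rest is mechanical.

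First, I would observe that Steps 1--2 generate a uniform random wedge centered at a degree-$d$ vertex, by the same argument as in \Clm{random}: since each of the $n_d$ degree-$d$ vertices is picked with probability $1/n_d$ and then an unordered pair of its $d$ neighbors is picked with probability $1/\binom{d}{2}$, each of the $W_d = n_d \binom{d}{2}$ wedges centered in $V_d$ is drawn with probability exactly $1/W_d$.

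Second, and this is the key combinatorial step, I would compute
\begin{displaymath}
\EX[Y_i] \;=\; \frac{1}{W_d}\sum_{w} Y(w),
\end{displaymath}
where the sum runs over all wedges $w$ centered at a vertex of $V_d$, and $Y(w)$ is defined as in Step 3 (with $Y(w)=0$ on open wedges). I would evaluate the sum by grouping the closed wedges according to the triangle $\tau$ they form. If $\tau$ has exactly $j \in \{0,1,2,3\}$ vertices in $V_d$, then exactly $j$ of the three wedges of $\tau$ are centered at $V_d$, and each such wedge receives weight $1/j$ in Step 3. Thus $\tau$ contributes $j \cdot (1/j) = 1$ when $j \geq 1$, and $0$ when $j=0$. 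Summing over triangles,
\begin{displaymath}
\sum_{w} Y(w) \;=\; \#\{\tau : \tau \text{ has at least one vertex in } V_d\} \;=\; T_d,
\end{displaymath}
so $\EX[Y_i] = T_d/W_d$.

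Finally, since the $Y_i$ are i.i.d.\ with $0 \leq Y_i \leq 1$, I apply \Thm{Hoeffding} to $\bar Y = \frac{1}{k}\sum_i Y_i$ with $k = \lceil 0.5 \eps^{-2}\ln(2/\delta)\rceil$, obtaining $|\bar Y - T_d/W_d| < \eps$ with probability at least $1-\delta$. Multiplying through by $W_d$ gives the stated bound $|W_d \bar Y - T_d| < \eps W_d$.

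The only nontrivial step is the double-counting argument in the second paragraph; the specific weights $1,\frac{1}{2},\frac{1}{3}$ are precisely what is needed so that every triangle touching $V_d$ is counted once overall, regardless of how many of its vertices lie in $V_d$. Everything else is immediate from the earlier results.
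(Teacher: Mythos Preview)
Your proof is correct and follows essentially the same approach as the paper: both establish $\EX[Y_i]=T_d/W_d$ via a double-counting argument (the paper partitions wedges and triangles into sets $S_i,S'_i$ and uses $|S_i|=i|S'_i|$, which is exactly your observation that a triangle with $j$ vertices in $V_d$ contributes $j$ wedges each weighted $1/j$), and then apply \Thm{Hoeffding} and multiply by $W_d$.
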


\begin{proof} For a single sampled wedge $w_i$, we define $Y_i$. We will
  argue below that the expected value of $\EX[Y]$ is exactly $T_d/W_d$. Once we have that, an application
  of the Hoeffding bound of \Thm{Hoeffding} shows that $|\bar Y - T_d/W_d| < \eps$ with
  probability greater than $1-\delta$. Multiplying this inequality by
  $W_d$, we get $|W_d \cdot \bar{Y} - T_d| < \eps W_d$,
  completing the proof.
  
  To show $\EX[Y] = T_d/W_d$, partition the set of all wedges centered on degree $d$ vertices into four
  sets $S_0, S_1, S_2$, and $S_3$. The set $S_i$ ($i \neq 0$) contains all
  closed wedges containing
  exactly $i$ degree-$d$ vertices. The remaining open wedges
  go into $S_0$. For a sampled wedge $w$, if $w \in S_i, i \neq 0$, then
  $Y_i = 1/i$. If $w \in S_0$, then $Y_i = 0$. The wedge $w$ is 
  a uniform random wedge from those centered on degree-$d$ vertices. Hence,
  $\EX[Y] = W^{-1}_d(|S_1| + |S_2|/2 + |S_3|/3)$. 
  
  Now partition the set of triangles involving degree $d$ vertices into
  three sets $S'_1, S'_2, S'_3$, where $S'_i$ is the set of triangles
  with $i$ degree $d$ vertices. Observe that $|S_i| = i|S'_i|$. If a triangle
  has $i$ vertices of degree $d$, then there are exactly $i$ wedges
  centered in degree $d$ vertices (in that triangle). 
  So, $|S_1| + |S_2|/2 + |S_3|/3 = $ $|S'_1| + |S'_2| + |S'_3|$ $= T_d$. Therefore,
  $\EX[Y] = T_d/W_d$.
  \qed
\end{proof}

The results of \Alg{trid} are shown in \Fig{td}. Once again, 
the data is grouped in logarithmic
bins of degrees, i.e., $\set{2}, \set{3,4}, \set{5,6,7,8}, \dots$.
In other words, $2^{i-1} <d_v \leq 2^i$ form
the $i$-th bin.  
The number of samples is $k=2000$. Here we also show the error bars corresponding to 99\% confidence ($\delta=0.01$).

\begin{figure}[hbtp]
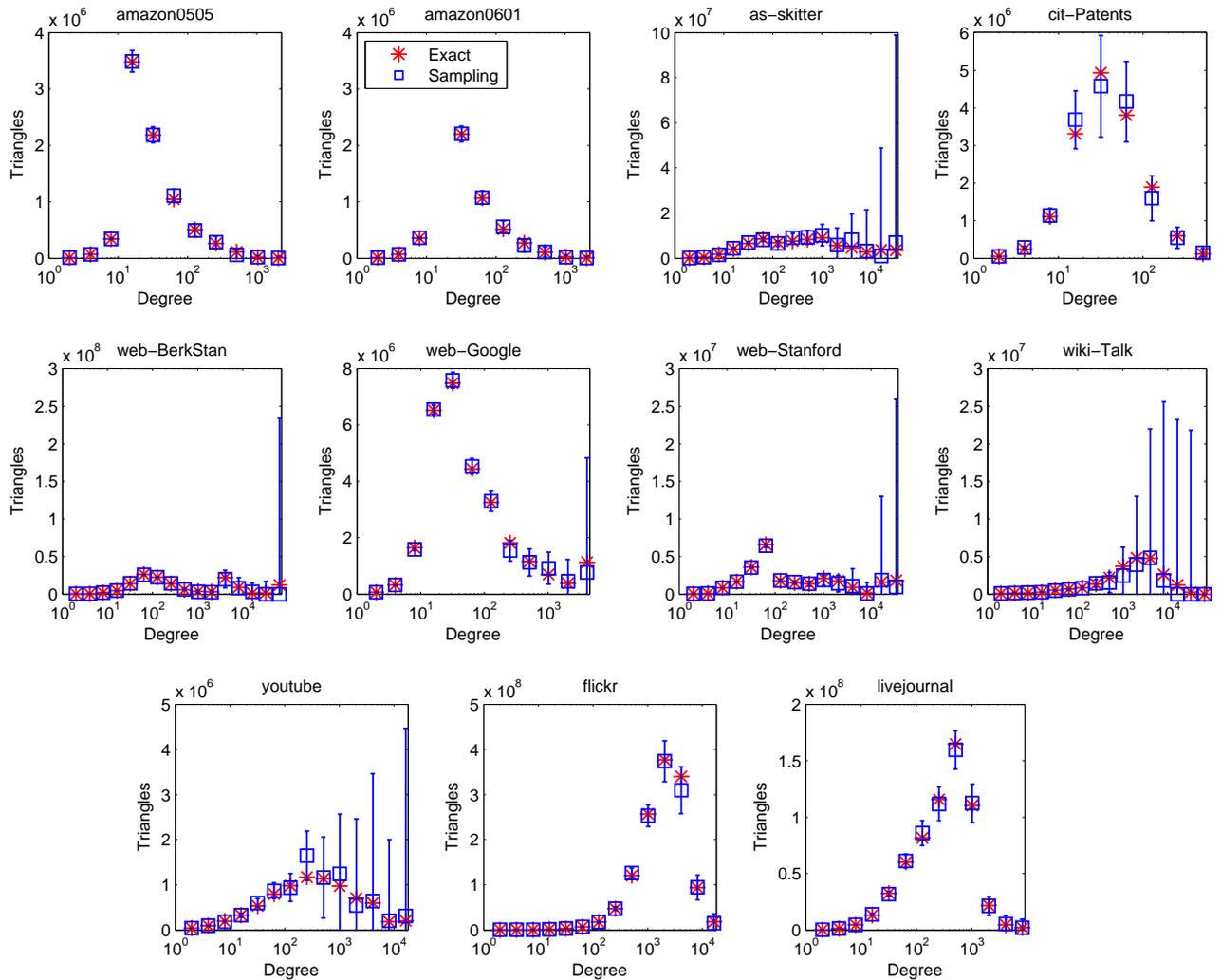

  \centering
    \subfloat{\label{fig:td:amazon0505}
    \includegraphics[width=1.6in,trim=0 0 0 0]{amazon0505-eps-converted-to.pdf}
  }
    \subfloat{\label{fig:td:amazon0601}
    \includegraphics[width=1.6in,trim=0 0 0 0]{amazon0601-eps-converted-to.pdf}
  }
  \subfloat{\label{fig:td:as-skitter}
    \includegraphics[width=1.6in,trim=0 0 0 0]{as-skitter-eps-converted-to.pdf}
  }
  \subfloat{\label{fig:td:cit-Patents}
    \includegraphics[width=1.6in,trim=0 0 0 0]{cit-Patents-eps-converted-to.pdf}
  } \\  
  \subfloat{\label{fig:td:web-BerkStan}
    \includegraphics[width=1.6in,trim=0 0 0 0]{web-BerkStan-eps-converted-to.pdf}
  }
    \subfloat{\label{fig:td:web-Google}
    \includegraphics[width=1.6in,trim=0 0 0 0]{web-Google-eps-converted-to.pdf}
  }
  \subfloat{\label{fig:td:web-Stanford}
    \includegraphics[width=1.6in,trim=0 0 0 0]{web-Stanford-eps-converted-to.pdf}
  }
  \subfloat{\label{fig:td:wiki-Talk}
    \includegraphics[width=1.6in,trim=0 0 0 0]{wiki-Talk-eps-converted-to.pdf}
  }     \\
  \subfloat{
    \includegraphics[width=1.6in,trim=0 0 0 0]{youtube-eps-converted-to.pdf}
  }
  \subfloat{
    \includegraphics[width=1.6in,trim=0 0 0 0]{flickr-eps-converted-to.pdf}
  }
  \subfloat{
    \includegraphics[width=1.6in,trim=0 0 0 0]{livejournal-eps-converted-to.pdf}
  }
  
\caption{Number of triangles per degree bin. The red asterisk is the exact count, while the blue box is the estimate based on 2000 samples. The bars show the error at 99\% confidence.}
\label{fig:td}
\end{figure}

\section{Counting directed triangles}
\label{sec:directed}

Counting  triangles in directed graphs is  more difficult, because  there are six types of wedges  and seven types of directed triangles  (up  to isomorphism); see \Fig{directed}.  
As mentioned earlier, a reciprocal edge is a pair of one-way edges $\{(i,j),(j,i)\}$ that are treated as a \emph{single}
reciprocal edge. For each vertex $v$, we have three associated degrees: the indegree, outdegree, and reciprocal degree, denoted by $\din{v}$, $\dout{v}$, and $\drec{v}$, respectively. 

We can generalize wedge sampling to count the different triangles types in~\Fig{dtri}. 
This is done by randomly sampling the different wedges, and looking for various directed closures.
We need some notation to formalize various directed triangle concepts, which is adapted from~\cite{SePiDuKo13}.

\begin{figure}[tbhp]
\centering
\subfloat[Directed wedges]
{
  \begin{tikzpicture}[nd/.style={circle,draw,fill=teal!50,inner sep=2pt},framed]    
    \matrix[column sep=0.4cm, row sep=0.2cm,ampersand replacement=\&]
    {
      \directedwedge{\linkba}{\linkba}{(i)} \&
      \directedwedge{\linkab}{\linkba}{(ii)} \&
      \directedwedge{\linkab}{\linkab}{(iii)} \\
      \directedwedge{\linkaba}{\linkab}{(iv)} \&
      \directedwedge{\linkaba}{\linkba}{(v)} \&
      \directedwedge{\linkaba}{\linkaba}{(vi)} \\
    };
  \end{tikzpicture}  \label{fig:dwedge}}
  \quad 
  \subfloat[Directed triangles]
  {\label{fig:dtri}
  \begin{tikzpicture}[nd/.style={circle,draw,fill=teal!50,inner sep=2pt},framed]    
    \matrix[column sep=0.4cm, row sep=0.2cm,ampersand replacement=\&]
    {
      \directedtriangle{\linkab}{\linkba}{\linkba}{(a)} \&
       \directedtriangle{\linkab}{\linkba}{\linkab}{(b)}  \&
       \directedtriangle{\linkaba}{\linkba}{\linkba}{(c)} \&
       \directedtriangle{\linkaba}{\linkab}{\linkba}{(d)} \\ 
      \directedtriangle{\linkaba}{\linkab}{\linkab}{(e)} \&
      \directedtriangle{\linkaba}{\linkaba}{\linkba}{(f)} \&
      \directedtriangle{\linkaba}{\linkaba}{\linkaba}{(g)} \& \\
    };
  \end{tikzpicture}  }

\subfloat[Number of directed wedges for each wedge type]{
\label{fig:wedge_counts}
\begin{tabular}{c|cccccc} 
$\psi$ %
& i & ii & iii & iv & v & vi 
\\\hline
$W_{v,\psi}$  
& $\displaystyle {\dout{v} \choose 2}$
& $\displaystyle \din{v}\dout{v}$ 
& $\displaystyle {\din{v} \choose 2}$   
& $\displaystyle \din{v}\drec{v}$ 
& $\displaystyle \dout{v}\drec{v}$ 
& $\displaystyle {\drec{v} \choose 2}$   \\
\end{tabular}
}
\\
\subfloat[The values of $\chi(\psi,\ttype)$, indicating of many wedges of type $\psi$ participate in triangles of type $\ttype$.\label{fig:triangle_by_wedge} ]{
\setlength{\tabcolsep}{3pt}
\begin{tabular}{c|c|cccccc|} 
\multicolumn{2}{c}{}
 & \multicolumn{6}{c}{Wedge type ($\psi$)} \\ \cline{3-8}
 \multicolumn{1}{c}{} & \multicolumn{1}{c|}{} & i & ii & iii  & iv  & v  & v \\ \cline{2-8}
\multirow{6}{*}{\vspace{-1.3in}\begin{rotate}{90}Triangle type ($\ttype$)\end{rotate}}
 & a & 1 & \textbf{1} & 1 &   &   &   \\ %
 & b &   & \textbf{3} &   &   &   &   \\ %
 & c & 1 &   &   &   & \textbf{2} &   \\ %
 & d &   & 1 &   & \textbf{1} & 1 &   \\ %
 & e &   &   & 1 & \textbf{2} &   &   \\ %
 & f &   &   &   & 1 & \textbf{1} & 1 \\ %
 & g &   &   &   &   &   & \textbf{3} \\ 
 \cline{2-8} %
\end{tabular} 
}
\caption{Information for counting triangles in directed graphs.}
\label{fig:directed}
\end{figure}

Naturally, the first step is to construct methods for sampling the different wedge types (see \Fig{dwedge}).
For wedge type $\psi$, let
$W_\psi$ denote the number of wedges of this type. 
For vertex $v$, let $W_{v,\psi}$ be the number of $\psi$-wedges
centered at $v$. It is straightforward to compute $W_{v,\psi}$ given the degrees of $v$; 
see the formulas in \Fig{wedge_counts}.

Given the number of $\psi$-wedges per vertex, we can use our standard template to sample
a uniform random $\psi$-wedge. This is given in the first two steps of \Alg{dtri}.
The procedure to select a uniform random $\psi$-wedge centered at $v$ depends on 
the wedge type. For example, to sample a type-\wout{} wedge, we pick a uniform random
pair of out-neighbors of $v$ (without replacement). To sample a type-\wmid{} wedge, 
we pick a uniform
random out-neighbor, and a uniform random in-neighbor. All other wedge types
are sampled analogously.

To get the triangle counts, we need a directed transitivity that measures
the fraction of closed directed wedges. Of course, a directed wedge can close
into many different types of triangles. For example, an type-\wout wedge can close
into a type-\ttrans{} or type-\toutrecip triangle. To measure this meaningfully,
a set of directed closure measures are defined in~\cite{SePiDuKo13}. 
We restate these definitions and show how
to count triangles using these measures.

Let $\chi(\psi,\ttype)$ denote the number of type-$\psi$ wedges in a type-$\ttype$ triangle.
For example, a type-\ttrans triangle
contains exactly one type-\wout wedge, but a type-\tcycle triangle contains three type-\wmid  wedges.
The list of these values is given in \Fig{triangle_by_wedge}.

We say that a $\psi$-wedge
is \emph{$\ttype$-closed} if the wedge participates in a $\ttype$-triangle.
The number of such wedges is exactly $\chi(\psi,\tau) T_\ttype$
(where $T_\ttype$ is the number of $\ttype$-triangles).
\emph{The $(\psi,\ttype)$-closure, $\gcc_{\psi,\ttype}$,
is the fraction of $\psi$-wedges that are $\ttype$-closed.}
$$ \gcc_{\psi,\ttype} = \frac{\chi(\psi,\ttype) T_\ttype}{W_\psi} $$
There are 15 non-trivial $(\psi,\ttype)$-closures, corresponding to the
non-zero entries in \Fig{triangle_by_wedge}. By the wedge sampling framework, 
we can estimate any $\gcc_{\psi,\ttype}$ value.
This value can be used to estimate $T_{\ttype}$.

\begin{algorithm}
\caption{Computing directed closure ratios (\algdtri{})\label{alg:dtri}}
\begin{algorithmic}[1]
    \STATE Select $k$ random vertices (with replacement) according to the
probability distribution defined by $\{p_v\}$ where $p_v =  W_{v,\psi}/W_\psi$.
    \STATE For each selected vertex $v$, pick a uniform random $\psi$-wedge centered at $v$. Note that if both edges are of the same type (i.e., two out-edges), then sampling is done without replacement. Otherwise, since the two edge-types are distinct and the sampling is independent.
	\STATE Determine $k'$, the number of $\ttype$-closed wedges among the sampled wedges.
	\STATE Compute estimate $\widehat{\gcc}_{\psi,\ttype} = k'/k$ for $\gcc_{\psi,\ttype}$.
	\STATE Output estimate $\widehat{\gcc}_{\psi,\ttype}\cdot W_\psi/\chi(\psi,\ttype)$ for $T_\ttype$.
	\end{algorithmic}
\end{algorithm}
\medskip

\begin{theorem} \label{thm:main-dir}
Fix a triangle type $\ttype$ and wedge type $\psi$ such that $\chi(\psi,\ttype) \neq 0$. 
Set $k = \cei{0.5 \eps^{-2}\ln(2/\delta)}$. The algorithm \algdtri{}
outputs an estimate 
$\widehat{T}_{\ttype}$ for $T_{\ttype}$
such that $|\widehat{T}_{\ttype} - T_{\ttype}| < \eps W_\psi/\chi(\psi,\ttype)$
with probability greater than $(1-\delta)$.
\end{theorem}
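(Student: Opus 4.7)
The plan is to reduce the statement to a direct application of \Thm{Hoeffding}, mirroring the structure of \Thm{kappa} but generalized for arbitrary wedge and triangle types. The whole argument reduces to two observations: (1) each of the $k$ samples is a uniform random $\psi$-wedge, and (2) the probability that such a wedge is $\ttype$-closed is exactly $\gcc_{\psi,\ttype}$.

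First I would prove a directed analogue of \Clm{random}: the two-stage procedure in steps 1--2 of \Alg{dtri} produces a uniform random $\psi$-wedge. The argument is identical in spirit to \Clm{random}. Fix any $\psi$-wedge $w$ centered at some vertex $v$. The probability that $v$ is selected is $p_v = W_{v,\psi}/W_\psi$, and conditional on $v$ being selected, the probability of generating $w$ is $1/W_{v,\psi}$, since in step 2 we pick a uniform random $\psi$-wedge among the $W_{v,\psi}$ candidates centered at $v$. Multiplying gives $1/W_\psi$, independent of $w$. The one place where I want to be a little careful is step 2 itself: for wedge types \wout, \win, and \wreciptot{} the two incident edges are of the same kind, so we sample a pair of neighbors without replacement from the relevant neighbor set; for the mixed types \wmid, \wrecipout, \wrecipin{} the two edges come from disjoint neighbor sets (e.g., out-neighbors and in-neighbors), so independent sampling from each set is automatically without repetition of the wedge. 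Either way, each $\psi$-wedge at $v$ is generated with probability $1/W_{v,\psi}$.

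Next, let $X_i$ be the indicator that the $i$-th sampled wedge is $\ttype$-closed. By the first step, each $X_i$ is the indicator of a uniform random $\psi$-wedge being $\ttype$-closed. The number of such wedges is $\chi(\psi,\ttype)\,T_\ttype$ by definition of $\chi$, so
\[
\EX[X_i] \;=\; \frac{\chi(\psi,\ttype)\,T_\ttype}{W_\psi} \;=\; \gcc_{\psi,\ttype}.
\]
The $X_i$ are independent (samples are drawn independently with replacement) and bounded in $[0,1]$, so with $\bar X = k^{-1}\sum_i X_i = \widehat{\gcc}_{\psi,\ttype}$, \Thm{Hoeffding} yields
\[
\Prob{|\widehat{\gcc}_{\psi,\ttype} - \gcc_{\psi,\ttype}| \geq \eps} \;\leq\; 2\exp(-2k\eps^2) \;\leq\; \delta
\]
for the stated choice of $k$.

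Finally, multiply through by $W_\psi/\chi(\psi,\ttype)$. Since $\widehat{T}_\ttype = \widehat{\gcc}_{\psi,\ttype}\cdot W_\psi/\chi(\psi,\ttype)$ and $T_\ttype = \gcc_{\psi,\ttype}\cdot W_\psi/\chi(\psi,\ttype)$, the inequality $|\widehat{\gcc}_{\psi,\ttype} - \gcc_{\psi,\ttype}| < \eps$ scales to $|\widehat{T}_\ttype - T_\ttype| < \eps\,W_\psi/\chi(\psi,\ttype)$, which is exactly the claimed bound. The only nontrivial step is the correctness of the uniform $\psi$-wedge sampler in the mixed-edge-type cases; everything after that is bookkeeping via Hoeffding and a rescaling.
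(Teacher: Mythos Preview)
Your proposal is correct and follows the same approach as the paper: apply Hoeffding to the indicator variables to bound $|\widehat{\gcc}_{\psi,\ttype}-\gcc_{\psi,\ttype}|$, then rescale by $W_\psi/\chi(\psi,\ttype)$. The paper's own proof is a two-sentence sketch (``as before''), and you have simply spelled out the details it elides, including the directed analogue of \Clm{random} and the case split on wedge types in step~2.
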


\begin{proof} Using the Hoeffding bound, we can argue (as before)
that $|\widehat{\gcc}_{\psi,\ttype} - \gcc_{\psi,\ttype}| < \eps$ with probability at least $(1-\delta)$.
We multiply this inequality by $W_\psi/\chi(\psi,\ttype)$ and observe that $T_\ttype = \gcc_{\psi,\ttype}\cdot W_\psi/\chi(\psi,\ttype)$.
\end{proof}

\begin{table}[thbp]
\caption{Properties of the directed graphs}
\label{tab:properties}
\centering
{\setlength{\tabcolsep}{1em}
\begin{tabular}{|r|r|r|r|r|r|r|}

\hline
\multicolumn{1}{|c|}{
Graph Name} &\multicolumn{1}{c|}{Nodes}& 	\multicolumn{1}{c|}{Dir.~Edges} & 	\multicolumn{1}{c|}{Wedges}	
&\multicolumn{1}{c|}{Triangles} & \multicolumn{1}{c|}{r} & \multicolumn{1}{c|}{$\gcc$}   \\\hline  	
amazon0505	& 410K&	3357K& 73M&	3951K & 0.55	&	0.162	\\
web-Stanford	&282K&	2312K &3944M & 11330K&	0.28 & 0.009\\
web-BerkStan&	685K	&7601K&	27983M &	64691K &	0.25 &	0.007\\
wiki-Talk	&2394K&	5021K&	12594M&	9204K	& 0.14 & 0.002 \\
web-Google&	876K&	5105K&	727M& 13392K&	0.31&	0.055\\
youtube&	1158K &	4945K&	1474M&	3057K&	0.79&	0.006 \\
flickr	&1861K &	22614K&\ 	14675M&	548659K &	0.62&	0.112\\
livejournal& 5284K & 76938K & 	7519M & 	310877K &	0.73 &  0.124\\ \hline 
\end{tabular}
}
\end{table} 

The experimental setup is the same as in the previous sections.  
We performed our experiments on 8 directed graphs, described in \Tab{properties}.
The results of applying \Alg{dtri} are shown in 
\Fig{dtd}. We use 32K wedges samples for each
of four specific wedge types (detailed below). As can be seen, we get accurate results for 
all triangle types, except possibly type-(b) triangles. These are so few that they do not even appear in the plot.

\begin{figure}[tbhp]
\centering
  \subfloat[amazon0505]{\label{fig:d-amazon0505}
    %
\begin{tikzpicture}[scale=1.000000]\draw [fill=\typea,thin] (0.100,0) node[below] at (0.250,0){{\tiny (a)}} rectangle (0.250,1.966) ;
\draw [fill=\typeb,thin] (0.260,0) rectangle (0.410,1.893) ;
\draw [fill=\typea,thin] (0.520,0) node[below] at (0.670,0){{\tiny (b)}} rectangle (0.670,0.002) ;
\draw [fill=\typeb,thin] (0.680,0) rectangle (0.830,0.002) ;
\draw [fill=\typea,thin] (0.940,0) node[below] at (1.090,0){{\tiny (c)}} rectangle (1.090,2.876) ;
\draw [fill=\typeb,thin] (1.100,0) rectangle (1.250,2.877) ;
\draw [fill=\typea,thin] (1.360,0) node[below] at (1.510,0){{\tiny (d)}} rectangle (1.510,0.201) ;
\draw [fill=\typeb,thin] (1.520,0) rectangle (1.670,0.198) ;
\draw [fill=\typea,thin] (1.780,0) node[below] at (1.930,0){{\tiny (e)}} rectangle (1.930,1.871) ;
\draw [fill=\typeb,thin] (1.940,0) rectangle (2.090,1.909) ;
\draw [fill=\typea,thin] (2.200,0) node[below] at (2.350,0){{\tiny (f)}} rectangle (2.350,2.427) ;
\draw [fill=\typeb,thin] (2.360,0) rectangle (2.510,2.316) ;
\draw [fill=\typea,thin] (2.620,0) node[below] at (2.770,0){{\tiny (g)}} rectangle (2.770,2.511) ;
\draw [fill=\typeb,thin] (2.780,0) rectangle (2.930,2.526) ;
\draw [<->, thick] (3.24,0) -- (0,0)-- (0,0.00) node[left] at (-0.8, 1.50) {\rotatebox{90}{\tiny \# Triangles}} -- (0, 3.00) ;
\draw [dashed] (0, 0.750) node [left]{\tiny $250K$} -- (3.060,0.750);
\draw [dashed] (0, 1.500) node [left]{\tiny $500K$} -- (3.060,1.500);
\draw [dashed] (0, 2.250) node [left]{\tiny $750K$} -- (3.060,2.250);
\draw [fill=\typea,thin] (0.50,3.00) node[right] at(0.65,3.15) {\tiny Exact} rectangle (0.72,3.23)  ;
\draw [fill=\typeb,thin] (2.00,3.00) node[right]at(2.15,3.15) {\tiny 32K} rectangle (2.23,3.23)  ;
\end{tikzpicture}
  }
\subfloat[web-Stanford]{\label{fig:d-Stanford}
    %
\begin{tikzpicture}[scale=1.000000]\draw [fill=\typea,thin] (0.100,0) node[below] at (0.250,0){{\tiny (a)}} rectangle (0.250,2.800) ;
\draw [fill=\typeb,thin] (0.260,0) rectangle (0.410,2.810) ;
\draw [fill=\typea,thin] (0.520,0) node[below] at (0.670,0){{\tiny (b)}} rectangle (0.670,0.005) ;
\draw [fill=\typeb,thin] (0.680,0) rectangle (0.830,0.005) ;
\draw [fill=\typea,thin] (0.940,0) node[below] at (1.090,0){{\tiny (c)}} rectangle (1.090,1.895) ;
\draw [fill=\typeb,thin] (1.100,0) rectangle (1.250,1.915) ;
\draw [fill=\typea,thin] (1.360,0) node[below] at (1.510,0){{\tiny (d)}} rectangle (1.510,0.027) ;
\draw [fill=\typeb,thin] (1.520,0) rectangle (1.670,0.026) ;
\draw [fill=\typea,thin] (1.780,0) node[below] at (1.930,0){{\tiny (e)}} rectangle (1.930,0.620) ;
\draw [fill=\typeb,thin] (1.940,0) rectangle (2.090,0.620) ;
\draw [fill=\typea,thin] (2.200,0) node[below] at (2.350,0){{\tiny (f)}} rectangle (2.350,0.130) ;
\draw [fill=\typeb,thin] (2.360,0) rectangle (2.510,0.141) ;
\draw [fill=\typea,thin] (2.620,0) node[below] at (2.770,0){{\tiny (g)}} rectangle (2.770,0.183) ;
\draw [fill=\typeb,thin] (2.780,0) rectangle (2.930,0.184) ;
\draw [<->, thick] (3.24,0) -- (0,0)-- (0,0.00) node[left] at (-0.8, 1.50) {\rotatebox{90}{\tiny \# Triangles}} -- (0, 3.00) ;
\draw [dashed] (0, 0.750) node [left]{\tiny $1.5M$} -- (3.060,0.750);
\draw [dashed] (0, 1.500) node [left]{\tiny $3M$} -- (3.060,1.500);
\draw [dashed] (0, 2.250) node [left]{\tiny $4.5M$} -- (3.060,2.250);
\draw [fill=\typea,thin] (0.50,3.00) node[right] at(0.65,3.15) {\tiny Exact} rectangle (0.72,3.23)  ;
\draw [fill=\typeb,thin] (2.00,3.00) node[right]at(2.15,3.15) {\tiny 32K} rectangle (2.23,3.23)  ;
\end{tikzpicture}
  } 
\subfloat[web-BerkStan]{\label{fig:d-BerkStan}
    %
\begin{tikzpicture}[scale=1.000000]\draw [fill=\typea,thin] (0.100,0) node[below] at (0.250,0){{\tiny (a)}} rectangle (0.250,3.000) ;
\draw [fill=\typeb,thin] (0.260,0) rectangle (0.410,3.000) ;
\draw [fill=\typea,thin] (0.520,0) node[below] at (0.670,0){{\tiny (b)}} rectangle (0.670,0.003) ;
\draw [fill=\typeb,thin] (0.680,0) rectangle (0.830,0.003) ;
\draw [fill=\typea,thin] (0.940,0) node[below] at (1.090,0){{\tiny (c)}} rectangle (1.090,2.875) ;
\draw [fill=\typeb,thin] (1.100,0) rectangle (1.250,2.875) ;
\draw [fill=\typea,thin] (1.360,0) node[below] at (1.510,0){{\tiny (d)}} rectangle (1.510,0.032) ;
\draw [fill=\typeb,thin] (1.520,0) rectangle (1.670,0.031) ;
\draw [fill=\typea,thin] (1.780,0) node[below] at (1.930,0){{\tiny (e)}} rectangle (1.930,1.288) ;
\draw [fill=\typeb,thin] (1.940,0) rectangle (2.090,1.288) ;
\draw [fill=\typea,thin] (2.200,0) node[below] at (2.350,0){{\tiny (f)}} rectangle (2.350,0.096) ;
\draw [fill=\typeb,thin] (2.360,0) rectangle (2.510,0.096) ;
\draw [fill=\typea,thin] (2.620,0) node[below] at (2.770,0){{\tiny (g)}} rectangle (2.770,0.798) ;
\draw [fill=\typeb,thin] (2.780,0) rectangle (2.930,0.799) ;
\draw [<->, thick] (3.24,0) -- (0,0)-- (0,0.00) node[left] at (-0.8, 1.50) {\rotatebox{90}{\tiny \# Triangles}} -- (0, 3.00) ;
\draw [dashed] (0, 0.750) node [left]{\tiny $6M$} -- (3.060,0.750);
\draw [dashed] (0, 1.500) node [left]{\tiny $12M$} -- (3.060,1.500);
\draw [dashed] (0, 2.250) node [left]{\tiny $18M$} -- (3.060,2.250);
\draw [fill=\typea,thin] (0.50,3.00) node[right] at(0.65,3.15) {\tiny Exact} rectangle (0.72,3.23)  ;
\draw [fill=\typeb,thin] (2.00,3.00) node[right]at(2.15,3.15) {\tiny 32K} rectangle (2.23,3.23)  ;
\end{tikzpicture}
  }
  
 \subfloat[wiki-Talk]{\label{fig:d-wiki-Talk}
    %
\begin{tikzpicture}[scale=1.000000]\draw [fill=\typea,thin] (0.100,0) node[below] at (0.250,0){{\tiny (a)}} rectangle (0.250,2.280) ;
\draw [fill=\typeb,thin] (0.260,0) rectangle (0.410,2.260) ;
\draw [fill=\typea,thin] (0.520,0) node[below] at (0.670,0){{\tiny (b)}} rectangle (0.670,0.172) ;
\draw [fill=\typeb,thin] (0.680,0) rectangle (0.830,0.144) ;
\draw [fill=\typea,thin] (0.940,0) node[below] at (1.090,0){{\tiny (c)}} rectangle (1.090,1.010) ;
\draw [fill=\typeb,thin] (1.100,0) rectangle (1.250,1.020) ;
\draw [fill=\typea,thin] (1.360,0) node[below] at (1.510,0){{\tiny (d)}} rectangle (1.510,1.060) ;
\draw [fill=\typeb,thin] (1.520,0) rectangle (1.670,0.883) ;
\draw [fill=\typea,thin] (1.780,0) node[below] at (1.930,0){{\tiny (e)}} rectangle (1.930,1.610) ;
\draw [fill=\typeb,thin] (1.940,0) rectangle (2.090,1.620) ;
\draw [fill=\typea,thin] (2.200,0) node[below] at (2.350,0){{\tiny (f)}} rectangle (2.350,2.230) ;
\draw [fill=\typeb,thin] (2.360,0) rectangle (2.510,2.160) ;
\draw [fill=\typea,thin] (2.620,0) node[below] at (2.770,0){{\tiny (g)}} rectangle (2.770,0.836) ;
\draw [fill=\typeb,thin] (2.780,0) rectangle (2.930,0.838) ;
\draw [<->, thick] (3.24,0) -- (0,0)-- (0,0.00) node[left] at (-0.8, 1.50) {\rotatebox{90}{\tiny \# Triangles}} -- (0, 3.00) ;
\draw [dashed] (0, 0.750) node [left]{\tiny $0.75M$} -- (3.060,0.750);
\draw [dashed] (0, 1.500) node [left]{\tiny $1.5M$} -- (3.060,1.500);
\draw [dashed] (0, 2.250) node [left]{\tiny $2.25M$} -- (3.060,2.250);
\draw [fill=\typea,thin] (0.50,3.00) node[right] at(0.65,3.15) {\tiny Exact} rectangle (0.72,3.23)  ;
\draw [fill=\typeb,thin] (2.00,3.00) node[right]at(2.15,3.15) {\tiny 32K} rectangle (2.23,3.23)  ;
\end{tikzpicture}
  }
      \subfloat[web-Google]{\label{fig:d-web-Google}
    %
\begin{tikzpicture}[scale=1.000000]\draw [fill=\typea,thin] (0.100,0) node[below] at (0.250,0){{\tiny (a)}} rectangle (0.250,2.868) ;
\draw [fill=\typeb,thin] (0.260,0) rectangle (0.410,2.880) ;
\draw [fill=\typea,thin] (0.520,0) node[below] at (0.670,0){{\tiny (b)}} rectangle (0.670,0.021) ;
\draw [fill=\typeb,thin] (0.680,0) rectangle (0.830,0.017) ;
\draw [fill=\typea,thin] (0.940,0) node[below] at (1.090,0){{\tiny (c)}} rectangle (1.090,2.514) ;
\draw [fill=\typeb,thin] (1.100,0) rectangle (1.250,2.490) ;
\draw [fill=\typea,thin] (1.360,0) node[below] at (1.510,0){{\tiny (d)}} rectangle (1.510,0.119) ;
\draw [fill=\typeb,thin] (1.520,0) rectangle (1.670,0.124) ;
\draw [fill=\typea,thin] (1.780,0) node[below] at (1.930,0){{\tiny (e)}} rectangle (1.930,1.164) ;
\draw [fill=\typeb,thin] (1.940,0) rectangle (2.090,1.158) ;
\draw [fill=\typea,thin] (2.200,0) node[below] at (2.350,0){{\tiny (f)}} rectangle (2.350,0.510) ;
\draw [fill=\typeb,thin] (2.360,0) rectangle (2.510,0.495) ;
\draw [fill=\typea,thin] (2.620,0) node[below] at (2.770,0){{\tiny (g)}} rectangle (2.770,0.840) ;
\draw [fill=\typeb,thin] (2.780,0) rectangle (2.930,0.846) ;
\draw [<->, thick] (3.24,0) -- (0,0)-- (0,0.00) node[left] at (-0.8, 1.50) {\rotatebox{90}{\tiny \# Triangles}} -- (0, 3.00) ;
\draw [dashed] (0, 0.750) node [left]{\tiny $1.25M$} -- (3.060,0.750);
\draw [dashed] (0, 1.500) node [left]{\tiny $2.5M$} -- (3.060,1.500);
\draw [dashed] (0, 2.250) node [left]{\tiny $3.75M$} -- (3.060,2.250);
\draw [fill=\typea,thin] (0.50,3.00) node[right] at(0.65,3.15) {\tiny Exact} rectangle (0.72,3.23)  ;
\draw [fill=\typeb,thin] (2.00,3.00) node[right]at(2.15,3.15) {\tiny 32K} rectangle (2.23,3.23)  ;
\end{tikzpicture}
  }
     \subfloat[youtube]{\label{fig:d-youtube}
    %
\begin{tikzpicture}[scale=1.000000]\draw [fill=\typea,thin] (0.100,0) node[below] at (0.250,0){{\tiny (a)}} rectangle (0.250,0.010) ;
\draw [fill=\typeb,thin] (0.260,0) rectangle (0.410,0.010) ;
\draw [fill=\typea,thin] (0.520,0) node[below] at (0.670,0){{\tiny (b)}} rectangle (0.670,0.000) ;
\draw [fill=\typeb,thin] (0.680,0) rectangle (0.830,0.000) ;
\draw [fill=\typea,thin] (0.940,0) node[below] at (1.090,0){{\tiny (c)}} rectangle (1.090,0.046) ;
\draw [fill=\typeb,thin] (1.100,0) rectangle (1.250,0.031) ;
\draw [fill=\typea,thin] (1.360,0) node[below] at (1.510,0){{\tiny (d)}} rectangle (1.510,0.012) ;
\draw [fill=\typeb,thin] (1.520,0) rectangle (1.670,0.013) ;
\draw [fill=\typea,thin] (1.780,0) node[below] at (1.930,0){{\tiny (e)}} rectangle (1.930,0.126) ;
\draw [fill=\typeb,thin] (1.940,0) rectangle (2.090,0.100) ;
\draw [fill=\typea,thin] (2.200,0) node[below] at (2.350,0){{\tiny (f)}} rectangle (2.350,0.418) ;
\draw [fill=\typeb,thin] (2.360,0) rectangle (2.510,0.480) ;
\draw [fill=\typea,thin] (2.620,0) node[below] at (2.770,0){{\tiny (g)}} rectangle (2.770,2.440) ;
\draw [fill=\typeb,thin] (2.780,0) rectangle (2.930,2.600) ;
\draw [<->, thick] (3.24,0) -- (0,0)-- (0,0.00) node[left] at (-0.8, 1.50) {\rotatebox{90}{\tiny \# Triangles}} -- (0, 3.00) ;
\draw [dashed] (0, 0.750) node [left]{\tiny $750K$} -- (3.060,0.750);
\draw [dashed] (0, 1.500) node [left]{\tiny $1.5M$} -- (3.060,1.500);
\draw [dashed] (0, 2.250) node [left]{\tiny $2.25M$} -- (3.060,2.250);
\draw [fill=\typea,thin] (0.50,3.00) node[right] at(0.65,3.15) {\tiny Exact} rectangle (0.72,3.23)  ;
\draw [fill=\typeb,thin] (2.00,3.00) node[right]at(2.15,3.15) {\tiny 32K} rectangle (2.23,3.23)  ;
\end{tikzpicture}
  }\\
\subfloat[flickr]{\label{fig:d-flickr}
    %
\begin{tikzpicture}[scale=1.000000]\draw [fill=\typea,thin] (0.100,0) node[below] at (0.250,0){{\tiny (a)}} rectangle (0.250,1.661) ;
\draw [fill=\typeb,thin] (0.260,0) rectangle (0.410,1.658) ;
\draw [fill=\typea,thin] (0.520,0) node[below] at (0.670,0){{\tiny (b)}} rectangle (0.670,0.014) ;
\draw [fill=\typeb,thin] (0.680,0) rectangle (0.830,0.015) ;
\draw [fill=\typea,thin] (0.940,0) node[below] at (1.090,0){{\tiny (c)}} rectangle (1.090,1.644) ;
\draw [fill=\typeb,thin] (1.100,0) rectangle (1.250,1.643) ;
\draw [fill=\typea,thin] (1.360,0) node[below] at (1.510,0){{\tiny (d)}} rectangle (1.510,0.403) ;
\draw [fill=\typeb,thin] (1.520,0) rectangle (1.670,0.458) ;
\draw [fill=\typea,thin] (1.780,0) node[below] at (1.930,0){{\tiny (e)}} rectangle (1.930,1.322) ;
\draw [fill=\typeb,thin] (1.940,0) rectangle (2.090,1.348) ;
\draw [fill=\typea,thin] (2.200,0) node[below] at (2.350,0){{\tiny (f)}} rectangle (2.350,2.738) ;
\draw [fill=\typeb,thin] (2.360,0) rectangle (2.510,2.663) ;
\draw [fill=\typea,thin] (2.620,0) node[below] at (2.770,0){{\tiny (g)}} rectangle (2.770,2.513) ;
\draw [fill=\typeb,thin] (2.780,0) rectangle (2.930,2.475) ;
\draw [<->, thick] (3.24,0) -- (0,0)-- (0,0.00) node[left] at (-0.8, 1.50) {\rotatebox{90}{\tiny \# Triangles}} -- (0, 3.00) ;
\draw [dashed] (0, 0.750) node [left]{\tiny $40M$} -- (3.060,0.750);
\draw [dashed] (0, 1.500) node [left]{\tiny $80M$} -- (3.060,1.500);
\draw [dashed] (0, 2.250) node [left]{\tiny $120M$} -- (3.060,2.250);
\draw [fill=\typea,thin] (0.50,3.00) node[right] at(0.65,3.15) {\tiny Exact} rectangle (0.72,3.23)  ;
\draw [fill=\typeb,thin] (2.00,3.00) node[right]at(2.15,3.15) {\tiny 32K} rectangle (2.23,3.23)  ;
\end{tikzpicture}
  }
\subfloat[livejournal]{\label{fig:d-livejournal}
    %
\begin{tikzpicture}[scale=1.000000]\draw [fill=\typea,thin] (0.100,0) node[below] at (0.250,0){{\tiny (a)}} rectangle (0.250,1.791) ;
\draw [fill=\typeb,thin] (0.260,0) rectangle (0.410,1.758) ;
\draw [fill=\typea,thin] (0.520,0) node[below] at (0.670,0){{\tiny (b)}} rectangle (0.670,0.006) ;
\draw [fill=\typeb,thin] (0.680,0) rectangle (0.830,0.007) ;
\draw [fill=\typea,thin] (0.940,0) node[below] at (1.090,0){{\tiny (c)}} rectangle (1.090,1.299) ;
\draw [fill=\typeb,thin] (1.100,0) rectangle (1.250,1.308) ;
\draw [fill=\typea,thin] (1.360,0) node[below] at (1.510,0){{\tiny (d)}} rectangle (1.510,0.219) ;
\draw [fill=\typeb,thin] (1.520,0) rectangle (1.670,0.224) ;
\draw [fill=\typea,thin] (1.780,0) node[below] at (1.930,0){{\tiny (e)}} rectangle (1.930,1.371) ;
\draw [fill=\typeb,thin] (1.940,0) rectangle (2.090,1.362) ;
\draw [fill=\typea,thin] (2.200,0) node[below] at (2.350,0){{\tiny (f)}} rectangle (2.350,2.034) ;
\draw [fill=\typeb,thin] (2.360,0) rectangle (2.510,2.094) ;
\draw [fill=\typea,thin] (2.620,0) node[below] at (2.770,0){{\tiny (g)}} rectangle (2.770,2.607) ;
\draw [fill=\typeb,thin] (2.780,0) rectangle (2.930,2.643) ;
\draw [<->, thick] (3.24,0) -- (0,0)-- (0,0.00) node[left] at (-0.8, 1.50) {\rotatebox{90}{\tiny \# Triangles}} -- (0, 3.00) ;
\draw [dashed] (0, 0.750) node [left]{\tiny $25M$} -- (3.060,0.750);
\draw [dashed] (0, 1.500) node [left]{\tiny $50M$} -- (3.060,1.500);
\draw [dashed] (0, 2.250) node [left]{\tiny $75M$} -- (3.060,2.250);
\draw [fill=\typea,thin] (0.50,3.00) node[right] at(0.65,3.15) {\tiny Exact} rectangle (0.72,3.23)  ;
\draw [fill=\typeb,thin] (2.00,3.00) node[right]at(2.15,3.15) {\tiny 32K} rectangle (2.23,3.23)  ;
\end{tikzpicture}
  }
   
\caption{Directed triangle counts using wedge sampling}
\label{fig:dtd}
\end{figure}
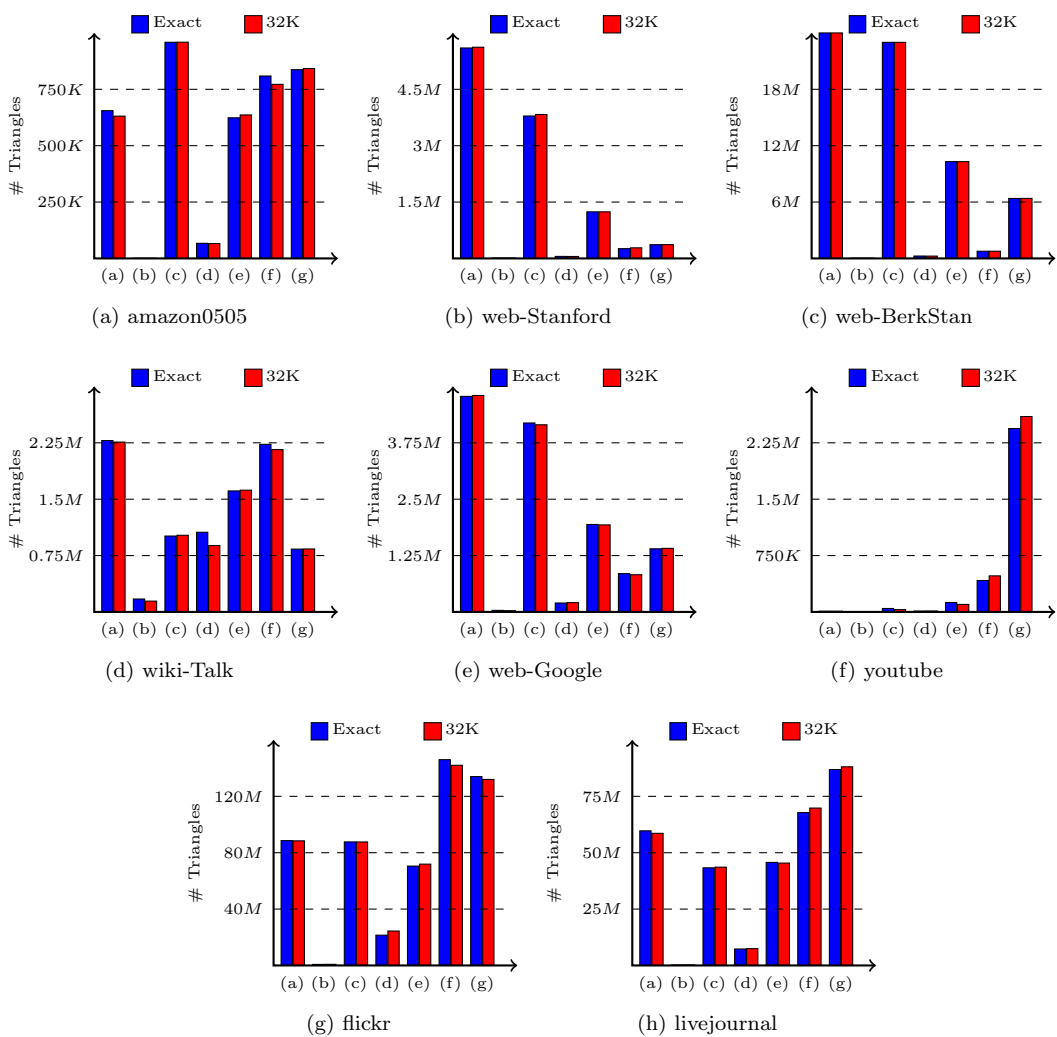

\begin{figure}[tbh]
\centering
\subfloat[amazon0505]{\label{fig:drel-amazon}
\begin{tikzpicture}\draw [fill=\typea,thin] (0.150,0) node[below] at (0.300,0){{\tiny (a)}} rectangle (0.450,1.111) ;
\draw [fill=\typea,thin] (0.610,0) node[below] at (0.760,0){{\tiny (b)}} rectangle (0.910,3.141) ;
\draw [fill=\typea,thin] (1.070,0) node[below] at (1.220,0){{\tiny (c)}} rectangle (1.370,0.014) ;
\draw [fill=\typea,thin] (1.530,0) node[below] at (1.680,0){{\tiny (d)}} rectangle (1.830,0.445) ;
\draw [fill=\typea,thin] (1.990,0) node[below] at (2.140,0){{\tiny (e)}} rectangle (2.290,0.601) ;
\draw [fill=\typea,thin] (2.450,0) node[below] at (2.600,0){{\tiny (f)}} rectangle (2.750,1.370) ;
\draw [fill=\typea,thin] (2.910,0) node[below] at (3.060,0){{\tiny (g)}} rectangle (3.210,0.186) ;
\draw [<->, thick] (3.67,0) -- (0,0)-- (0,0.09) node[left] at (-0.8, 1.80) {\rotatebox{90}{\scriptsize Relative error}} -- (0, 3.60) ;
\draw [dashed] (0, 0.900) node [left]{\scriptsize $0.03$} -- (3.390,0.900);
\draw [dashed] (0, 1.800) node [left]{\scriptsize$0.06$} -- (3.390,1.800);
\draw [dashed] (0, 2.700) node [left]{\scriptsize$0.09$} -- (3.390,2.700);
\end{tikzpicture}
}
\subfloat[web-Stanford]{\label{fig:drel-stanford}
\begin{tikzpicture}\draw [fill=\typea,thin] (0.150,0) node[below] at (0.300,0){{\tiny (a)}} rectangle (0.450,0.278) ;
\draw [fill=\typea,thin] (0.610,0) node[below] at (0.760,0){{\tiny (b)}} rectangle (0.910,2.872) ;
\draw [fill=\typea,thin] (1.070,0) node[below] at (1.220,0){{\tiny (c)}} rectangle (1.370,1.353) ;
\draw [fill=\typea,thin] (1.530,0) node[below] at (1.680,0){{\tiny (d)}} rectangle (1.830,1.739) ;
\draw [fill=\typea,thin] (1.990,0) node[below] at (2.140,0){{\tiny (e)}} rectangle (2.290,0.819) ;
\draw [fill=\typea,thin] (2.450,0) node[below] at (2.600,0){{\tiny (f)}} rectangle (2.750,1.277) ;
\draw [fill=\typea,thin] (2.910,0) node[below] at (3.060,0){{\tiny (g)}} rectangle (3.210,0.403) ;
\draw [<->, thick] (3.67,0) -- (0,0)-- (0,0.09) node[left] at (-0.8, 1.80) {\rotatebox{90}{\scriptsize Relative error}} -- (0, 3.60) ;
\draw [dashed] (0, 0.900) node [left]{\scriptsize $0.03$} -- (3.390,0.900);
\draw [dashed] (0, 1.800) node [left]{\scriptsize$0.06$} -- (3.390,1.800);
\draw [dashed] (0, 2.700) node [left]{\scriptsize$0.09$} -- (3.390,2.700);
\end{tikzpicture}
}
\subfloat[Average]{\label{fig:drel-average}
\begin{tikzpicture}\draw [fill=\typea,thin] (0.150,0) node[below] at (0.300,0){{\tiny (a)}} rectangle (0.450,0.294) ;
\draw [fill=\typea,thin] (0.610,0) node[below] at (0.760,0){{\tiny (b)}} rectangle (0.910,3.275) ;
\draw [fill=\typea,thin] (1.070,0) node[below] at (1.220,0){{\tiny (c)}} rectangle (1.370,1.361) ;
\draw [fill=\typea,thin] (1.530,0) node[below] at (1.680,0){{\tiny (d)}} rectangle (1.830,1.675) ;
\draw [fill=\typea,thin] (1.990,0) node[below] at (2.140,0){{\tiny (e)}} rectangle (2.290,0.988) ;
\draw [fill=\typea,thin] (2.450,0) node[below] at (2.600,0){{\tiny (f)}} rectangle (2.750,1.481) ;
\draw [fill=\typea,thin] (2.910,0) node[below] at (3.060,0){{\tiny (g)}} rectangle (3.210,0.422) ;
\draw [<->, thick] (3.67,0) -- (0,0)-- (0,0.09) node[left] at (-0.8, 1.80) {\rotatebox{90}{\scriptsize Relative error}} -- (0, 3.60) ;
\draw [dashed] (0, 0.900) node [left]{\scriptsize $0.03$} -- (3.390,0.900);
\draw [dashed] (0, 1.800) node [left]{\scriptsize$0.06$} -- (3.390,1.800);
\draw [dashed] (0, 2.700) node [left]{\scriptsize$0.09$} -- (3.390,2.700);
\end{tikzpicture}
}
\caption{Relative error for the 7 triangle types}
\label{fig:dsummary}
\end{figure}
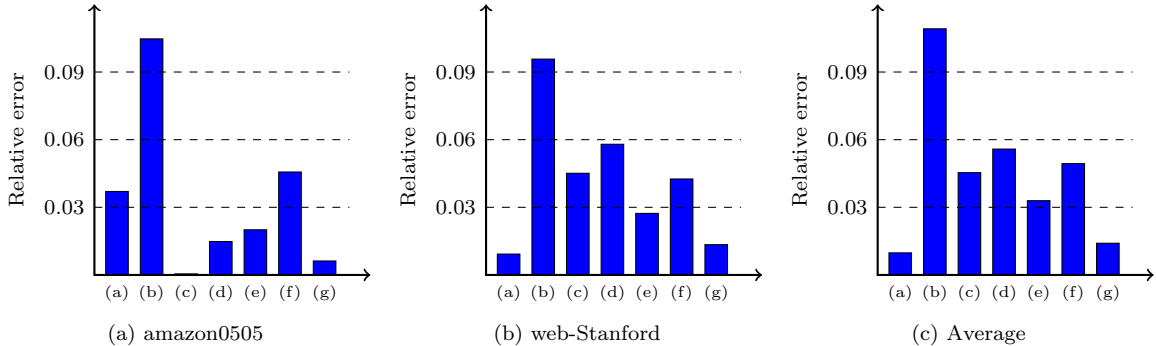

We take a closer look at the results, using \emph{relative error} plots. For brevity, we provide
these only for \texttt{amazon0505} and \texttt{web-Stanford} graphs, and give an average over all the graphs. A clear pattern 
appears, where the relative error for type-(b) triangles is worse than the others.
This is where a weakness of wedge sampling becomes apparent: When a triangle type is extremely infrequent, wedge
sampling is limited in the quality of the estimate. 
More details are shown in \Tab{typeb}, where we show the relative error for these
triangles, as well the fraction of these triangles (with respect to the total count). Across the board,
we see that type-(b) triangles are extremely rare; hence, wedge sampling is unable to get sharp estimates.
It would be very interesting to design a fast method that accuractely counts such rare triangles.

\begin{table}[thbp]
\caption{A closer look at type-(b) triangles}
\label{tab:typeb}
\centering
\begin{tabular}{|r|c|r|c|}
  \hline
Graph & Relative Error & Count & Ratio to all triangles \\ \hline 
amazon0505 &	0.105 & 623&	0.00013 \\
web-Stanford	& 0.061 & 9016& 	0.00053 \\
web-BerkStan	& 0.063 & 	23035 & 0.00026 \\
wiki-Talk&	0.162& 171903&	0.01497 \\
web-Google&	0.166& 34418&	0.00189 \\ 
flickr&	0.030& 773098 &	0.00111 \\
youtube&	0.113& 118 & 	0.00002 \\
livejournal&	0.173& 188666&	0.00047 \\\hline 
\end{tabular}
\end{table} 

There are some subtleties about the implementation that are worth mentioning. 
We have a choice of which wedge types to use to count $\ttype$-triangles. We can get some reuse if we use the same wedge type to count multiple triangles. For instance, type-\ttrans\  triangles can be counted using sampling over type \wout, \wmid, or \win  wedges.
By \Thm{main-dir}, sampling over $\psi$-wedges yields error is $\eps W_\psi/\chi(\psi,\ttype)$.
Hence, \emph{less frequent wedge types give better approximations}
for the same triangle type. Symmetrically,  the same wedge type can be used to count multiple triangle types. For instance, \wmid-wedges can be used to count type \ttrans, \tcycle, 
and \tmidrecip triangles.  
We use the following wedges for our triangle counts:
\begin{asparaitem}
  \item type-(ii) wedges are used to count triangles of type (a) and (b),
  \item type-(v) wedges are used to count triangles of type (c) and (f),
  \item type-(iv) wedges are used to count triangles of type (d) and (e), and
  \item type-(vi) wedges are used to count triangles of type (g).
\end{asparaitem}
The combinations that we use are boldface in \Fig{triangle_by_wedge}.
The choice of wedge can impact the accuracy of the directed triangle counts.
For example, type-(a)-triangles can be counted using wedges of type (i), (ii), or (iii). In \Fig{3wedges},
we show the relative error in the estimate for type-(a) triangles using 32K samples
via the different wedge types. Because of the high frequency of type-(iii) wedges,
the error in triangle counts using these wedges is quite large.
Other than one case, (ii)-wedges are the best choice. 

\begin{figure*}
\centering
\begin{tikzpicture}\draw [fill=\typea,thin] (0.150,0) rectangle (0.400,0.285) ;
\draw [fill=\typeb,thin] (0.410,0) node[below]{\rotatebox[origin=t]{90}{\scriptsize amazon0505}} rectangle (0.660,0.741) ;
\draw [fill=\typec,thin] (0.670,0) rectangle (0.920,0.141) ;
\draw [fill=\typea,thin] (1.080,0) rectangle (1.330,0.156) ;
\draw [fill=\typeb,thin] (1.340,0) node[below]{\rotatebox[origin=t]{90}{\scriptsize web-Stanford}} rectangle (1.590,0.054) ;
\draw [fill=\typec,thin] (1.600,0) rectangle (1.850,3.000) ;
\draw [fill=\typea,thin] (2.010,0) rectangle (2.260,0.224) ;
\draw [fill=\typeb,thin] (2.270,0) node[below]{\rotatebox[origin=t]{90}{\scriptsize web-BerkStan}} rectangle (2.520,0.026) ;
\draw [fill=\typec,thin] (2.530,0) rectangle (2.780,3.000) ;
\draw [fill=\typea,thin] (2.940,0) rectangle (3.190,3.000) ;
\draw [fill=\typeb,thin] (3.200,0) node[below]{\rotatebox[origin=t]{90}{\scriptsize wiki-Talk}} rectangle (3.450,0.187) ;
\draw [fill=\typec,thin] (3.460,0) rectangle (3.710,0.222) ;
\draw [fill=\typea,thin] (3.870,0) rectangle (4.120,0.163) ;
\draw [fill=\typeb,thin] (4.130,0) node[below]{\rotatebox[origin=t]{90}{\scriptsize web-Google}} rectangle (4.380,0.086) ;
\draw [fill=\typec,thin] (4.390,0) rectangle (4.640,1.104) ;
\draw [fill=\typea,thin] (4.800,0) rectangle (5.050,0.469) ;
\draw [fill=\typeb,thin] (5.060,0) node[below]{\rotatebox[origin=t]{90}{\scriptsize flickr-links}} rectangle (5.310,0.034) ;
\draw [fill=\typec,thin] (5.320,0) rectangle (5.570,0.291) ;
\draw [fill=\typea,thin] (5.730,0) rectangle (5.980,3.000) ;
\draw [fill=\typeb,thin] (5.990,0) node[below]{\rotatebox[origin=t]{90}{\scriptsize youtube-links}} rectangle (6.240,0.129) ;
\draw [fill=\typec,thin] (6.250,0) rectangle (6.500,3.000) ;
\draw [fill=\typea,thin] (6.660,0) rectangle (6.910,0.248) ;
\draw [fill=\typeb,thin] (6.920,0) node[below]{\rotatebox[origin=t]{90}{\scriptsize livejournal}} rectangle (7.170,0.372) ;
\draw [fill=\typec,thin] (7.180,0) rectangle (7.430,0.083) ;
\draw [<->, thick] (7.89,0) -- (0,0)-- (0,0.06) node[left] at (-0.8, 1.50) {\rotatebox{90}{\scriptsize Relative Error}} -- (0, 3.30) ;
\draw [dashed] (0, 1.000) node [left]{\scriptsize $0.05$} -- (7.910,1.000);
\draw [dashed] (0, 2.000) node [left]{\scriptsize$0.1$} -- (7.910,2.000);
\draw [dashed] (0, 3.000) node [left]{\scriptsize$0.15$} -- (7.910,3.000);
\draw [fill=\typea,thin] (1.40,3.30) node[right] at(2.15,3.55) {\small (i)} rectangle (1.90,3.80)  ;
\draw [fill=\typeb,thin] (3.86,3.30) node[right]at(4.61,3.55) {\small (ii)} rectangle (4.36,3.80)  ;
\draw [fill=\typec,thin] (6.32,3.30) node[right]at(7.07,3.55) {\small (iii)} rectangle (6.82,3.80)  ;
\end{tikzpicture}
\caption{Estimates of the number of type-(a) triangles using different wedges types. Sample size
is fixed to 32K.} 
\label{fig:3wedges} 
\end{figure*}
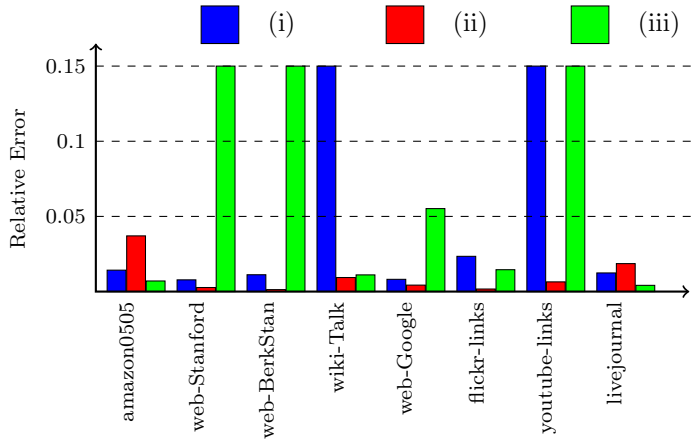

\section{Conclusions and future work} 
\label{sec:conc}
We proposed a series of wedge-based algorithms for computing
various triadic measures on graphs. Our algorithms come with
theoretical guarantees in the form of specific error and confidence
bounds.  The number of samples required to
achieve a specified error and confidence bound is independent of the
graph size.  For instance, 38,000 samples suffice for an additive error in the 
transitivity of less
than 0.01 with 99.9\% confidence \emph{for any graph}. The limiting
factors have to do with determining the sampling proportions; for instance, we
need to know the degree of each vertex and the overall degree
distribution to compute the transitivity.

The flexibility of wedge sampling along with the hard error bounds
essentially redefines what is feasible in terms of graph analysis. The
very expensive computation of clustering coefficient is now much
faster and we can
consider much larger graphs than before. In an extension of this work,
we described a MapReduce implementation of this method that scales
to billions of edges, leading to some of the first published results
at these sizes~\cite{KoPiPlSe13}.

With triadic analysis no longer being a computational burden, we can
extend our horizons into new territories and look at attributed triangles (e.g., we might compare the clustering
coefficient for ``teenage'' versus ``adult'' nodes in a social network),
evolution of triadic connections, higher-order structures such as
4-cycles and 4-cliques, and so on.

\bibliographystyle{siammod}

\end{document}